\newtheorem{theorem}{Theorem}
\newtheorem{lemma}{Lemma}
\newtheorem{proposition}{Proposition}
\DeclareMathOperator*{\argmax}{arg\,max}
\begin{document}

\title{Fluid Antenna with Linear MMSE Channel Estimation for Large-Scale Cellular Networks}
\author{Christodoulos Skouroumounis, \IEEEmembership{Member},
	and~Ioannis Krikidis, \IEEEmembership{Fellow}% <-this % stops a space
	\thanks{Christodoulos Skouroumounis and Ioannis Krikidis are with the IRIDA Research Centre for Communication Technologies, Department of Electrical and Computer Engineering, University of Cyprus, Cyprus, e-mail:\{cskour03, krikidis\}@ucy.ac.cy.}}

\maketitle

\begin{abstract}
The concept of reconfigurable fluid antennas (FA) is a potential and promising solution to enhance the spectral efficiency of wireless communication networks. Despite their many advantages, FA-enabled communications have limitations as they require an enormous amount of spectral resources in order to select the most desirable position of the radiating element from a large number of prescribed locations. In this paper, we present an analytical framework for the outage performance of large-scale FA-enabled communications, where all user equipments (UEs) employ circular multi-FA array. In contrast to existing studies, which assume perfect channel state information, the developed framework accurately captures the channel estimation errors on the performance of the considered network deployments. In particular, we focus on the limited coherence interval scenario, where a novel sequential linear minimum mean-squared error (LMMSE)-based channel estimation method is performed for only a very small number of FA ports. Next, for the communication of each BS with its associated UE, a low-complexity port-selection technique is employed, where the port that provides the highest signal-to-interference-plus-noise-ratio is selected among the ports that are estimated to provide the strongest channel from each FA. By using stochastic geometry tools, we derive both analytical and closed-form expressions for the outage probability, highlighting the impact of channel estimation on the performance of FA-based UEs. Our results reveal the trade-off imposed between improving the network's performance and reducing the channel estimation quality, indicating new insights for the design of FA-enabled communications.
\end{abstract}

\begin{IEEEkeywords}
Fluid antenna, outage probability, LMMSE, port selection, stochastic geometry.
\end{IEEEkeywords}

\IEEEpeerreviewmaketitle
\newpage

\section{Introduction}
The increasingly demanding objectives for sixth generation (6G) wireless communication systems have spurred recent research activities on novel transceiver hardware architectures and relevant communication algorithms. Towards this direction, the concept of high-performance fluid antennas (FAs) has envisioned as a promising technology in future communication devices due to a range of attractive features such as conformability, flexibility, and reconfigurability \cite{AKY}. More specifically, FAs consist of liquid radiating elements (e.g., mercury, eutectic gallium indium, galinstan, etc.), which are contained in a dielectric holder and can thus flow in different locations (i.e., a set of predefined ports) within its topological boundaries with the assistance of a dedicated microelectromechanical system (MEMS)\cite{HUA}. Therefore, FAs are capable to reversibly re-configure their physical configuration (i.e., size, shape and feeding) as well as their electrical properties (e.g., resonant frequency, bandwidth), providing a new degree of freedom in the design of wireless communication systems \cite{TAR}. 

As a result, FA-enabled communications has recently attracted extensive attention by the research community and the industry and has already been investigated under various different communication scenarios. In particular, the wide range of functionalities attain by the employment of reconfigurable FAs is investigated, e.g. \cite{MOR,DEY,BOR,SIN,SHE}, while a contemporary survey on this topic can be found in \cite{HUA}. From an information/communication theoretical standpoint, the concept of FA is examined for point-to-point communication systems, where the achieved performance in terms of outage probability \cite{WON1} and ergodic capacity \cite{WON2} is assessed under spatially correlated channels. It is revealed that a single FA with half-wavelength or less separation between ports can attain capacity and outage performance similar to the conventional multi-antenna maximum ratio combining (MRC) system, if the number of ports is sufficiently large. The aforementioned works are further extended into the context of multi-user communications in \cite{WON3}, where the authors propose a mathematical framework that takes into consideration multiple pairs of transmitters and FA-based receivers. More specifically, the outage performance, the achievable rate, and the multiplexing gain of the considered topology are evaluated, demonstrating that the overall network performance improves with the number of ports at each receiver. Moreover, the problem of port selection in the context of FA-enabled communications is investigated in \cite{CHA}. Specifically, the authors investigate several port selection algorithms by exploiting both machine learning and analytical approximation tools for a communication scenario where a transmitter observes only a limited number of ports to reduce complexity. Although the above works shed light on the performance experienced by user equipments (UEs) with a single FA, the concept of multi-FA UEs, aiming to boost the achieved diversity gain and thereby enhance the network capacity, is missing from the literature.

To unleash the full potential of FA-enabled communications, accurate channel estimation is an essential prerequisite. Nevertheless, the aforementioned studies assume that the FA-enabled communications have perfect knowledge of the channel state information (CSI) for all FA-associated links. In practice, such CSI needs to be acquired in each channel coherence interval at the cost of a channel training overhead that escalates with the number of FA ports \cite{YAZ}. In the context of a limited coherence interval scenario, this training period inevitably leads in a reduced data transmission duration and, consequently, in a decreased overall network performance. Therefore, the concept of channel estimation in large-scale FA-enabled communications, triggers a non-trivial trade-off between the channel training duration and the overall network performance. The impact of channel estimation on the network performance has attracted substantial attention by the research community and has already been investigated under different communication scenarios, including simple network settings \cite{LOU,KAY} and cellular networks \cite{JOS,YIN}. Nevertheless, a major limitation of the above-mentioned works is the adopted deterministic network configuration, which does not captures the irregularity associated with the actual deployments of cellular networks. In such a context, system-level performance evaluation will be very important to formulate relevant insights into trade-offs that govern such a complex system. Over the past decade, stochastic geometry (SG) has emerged as a powerful mathematical, which captures the random nature of large-scale networks and permits the analytical characterization of numerous performance metrics \cite{SGbook}. Even though the impact of channel estimation on the network performance has been widely studied in small-scale networks, such as single-cell scenarios, there is minimal research on the effects on large-scale networks. The authors in \cite{WU1}, evaluated the impact of channel estimation in the context of point-to-point single-input single-output ad-hoc systems, by using linear minimum mean square error (LMMSE) channel estimation. The coverage probability and the impact of channel estimation on the performance of random networks have been studied in \cite{WU2}, capturing the dependence of the optimal training-pilot length on the ratio between the receiver and transmitter densities. In the context of FA systems, in \cite{SKO}, the authors assess the effect of FA technology on large-scale cellular networks, and study the trade-off imposed by the channel estimation on the outage performance; it is unveiled that an optimal number of FAs' ports maximizes the network performance with respect to the LMMSE-based channel estimation process.

Motivated by the above, in this work, we propose a novel LMMSE-based port selection (PS) technique and investigate the achieved performance of large-scale FA-enabled cellular networks under a limited coherence interval scenario, where all UEs employ a circular multi-FA array. By taking into account spatial randomness, we provide a rigorous mathematical framework to analyze the performance of the FA-based UEs employing the PS technique, in terms of outage probability. More specifically, the main contributions of this paper are summarized as follows:
\begin{itemize}
	\item An analytical framework is proposed based on SG, which comprises the co-design of FA technology and homogeneous cellular networks, shedding light on the modeling and analysis of large-scale FA-enabled communications. Through this paper, we extend the work presented in \cite{SKO} by considering multiple FAs at the UEs, aiming at elevating the spatial diversity gain and thus improving the achieved spectral efficiency. Based on the developed mathematical framework, the performance of the considered network deployments is assessed in terms of outage probability under a limited coherence interval scenario.
	\item Building on the developed mathematical framework, we propose two novel and low-complexity schemes, namely skipped-enabled LMMSE-based channel estimation (SeCE) technique and port-selection (PS) scheme. Initially, by leveraging the existence of strong spatial correlation between adjacent FA's ports, the proposed SeCE scheme leads to a LMMSE-based channel estimation process of only a subset of ``selected'' ports from each FA, thereby reducing the signalling overhead at the cost of reduced signal-to-interference-plus-noise-ratio (SINR). Consequently, the employment of the proposed SeCE technique triggers a non-trivial trade-off between mitigating the signaling overhead and reducing the observed SINR. Furthermore, the proposed PS scheme is based on a two-stage procedure. In the first stage, a set of ``candidate'' ports is defined consisting of the ``selected'' port that is estimated to provide the strongest channel from each FA while, at the second stage, the port that provides the highest SINR is selected among the set of ``candidate'' ports. The proposed two-stage PS scheme reduces the signaling overhead and thus is promising for practical and low-complexity implementations. 
	\item By using SG tools, an analytical expression for the outage performance achieved by the considered network deployments is derived. Moreover, under specific practical assumptions, closed-form expressions for the upper and lower bound of the conditional outage performance are derived. These closed-form expressions provide a quick and convenient methodology to evaluate the system's performance and obtain insights into key design parameters. Our results unveil that an optimal number of FAs and FAs' ports maximize the network performance with respect to the LMMSE-based channel estimation technique. Finally, compared to the conventional schemes, where channel estimation is performed for all FAs' ports, the proposed scheme significantly enhances the achieved network performance, especially for FAs with a large number of ports.
\end{itemize}

\textit{Organization:} Section II introduces the system model together with the network topology, FA, and channel models. The low-complexity SeCE and PS schemes are proposed in Section III, and the analytical along with the asymptotic expressions for the performance achieved by the large-scale FA-based cellular networks are evaluated in Section IV. Simulation results are presented in Section V, followed by our conclusions in Section VII. 
\begin{table*}[t]\centering
	\caption{Summary of Notations}\label{Table1}
	\scalebox{0.85}{
		\begin{tabular}{| l | l || l | l |}\hline
			\textbf{Notation} 		& \textbf{Description} 									& \textbf{Notation} 						& \textbf{Description}\\\hline
			$\Phi,\lambda_b$ 		& PPP of BSs of density $\lambda_b$ 					&  $W_c, T_c$								& Coherence bandwidth and time  \\\hline
			$\Psi$, $\lambda_u$ 	& Point process of UEs of density $\lambda_u$  			&  $L_e, L_t$								& Channel estimation and data transmission period  \\\hline
			$M$,$\mathcal{M}$ 		& Number and set of FAs equipped by each UE 			&  $\nu$									& Number of skipped FA's ports \\\hline
			$N,\mathcal{N}$ 		& Number and set of FA's ports 							&  $N',\mathfrak{N}$						& Number and set of ``selected'' ports  \\\hline
			$\kappa$, $\lambda$ 	& Scaling constant and communication wavelength 		&  $l_s$									& Switching channel uses  \\\hline
			$d_i$ 					& Displacement of the $i$-th port  						&  $\Delta$									& Number of pilot-training symbols for each port\\\hline
			$u,\delta(x)$ 			& Average velocity and delay of fluid metal  	&  $\mathcal{I}$							& Instantaneous power of multi-user interference \\\hline
			$r_i(\rho)$ 			& Distance between the $i$-th port and the serving BS 	&  $f_\mathcal{I}(\cdot),\varpi,\varrho$	& Gamma distribution with parameters $\varpi$ and $\varrho$ \\\hline
			$P$ 					& Transmission power  									&  $\mathcal{C}$							& Set of ``candidate'' ports \\\hline
			$\ell(r)$ 				& Large scale path loss 								&  SINR$_i$									& SINR observed at the $i$-th port \\\hline
			$\mu_i$ 				& Autocorrelation parameter 							&  $\varepsilon$							& Transmit signal to noise ratio \\\hline
			$L_c$					& Channel uses in each coherence interval 				&  $\sigma_e^2|\rho$						& Variance of the channel estimation error \\\hline
	\end{tabular}}
\end{table*}
\section{System Model}\label{SystemModel}
In this section, we provide details of the considered system model. A list of the main mathematical notations is presented in Table \ref{Table1}.
\subsection{Network topology}
\begin{figure}
	\centering\includegraphics[width=0.55\linewidth]{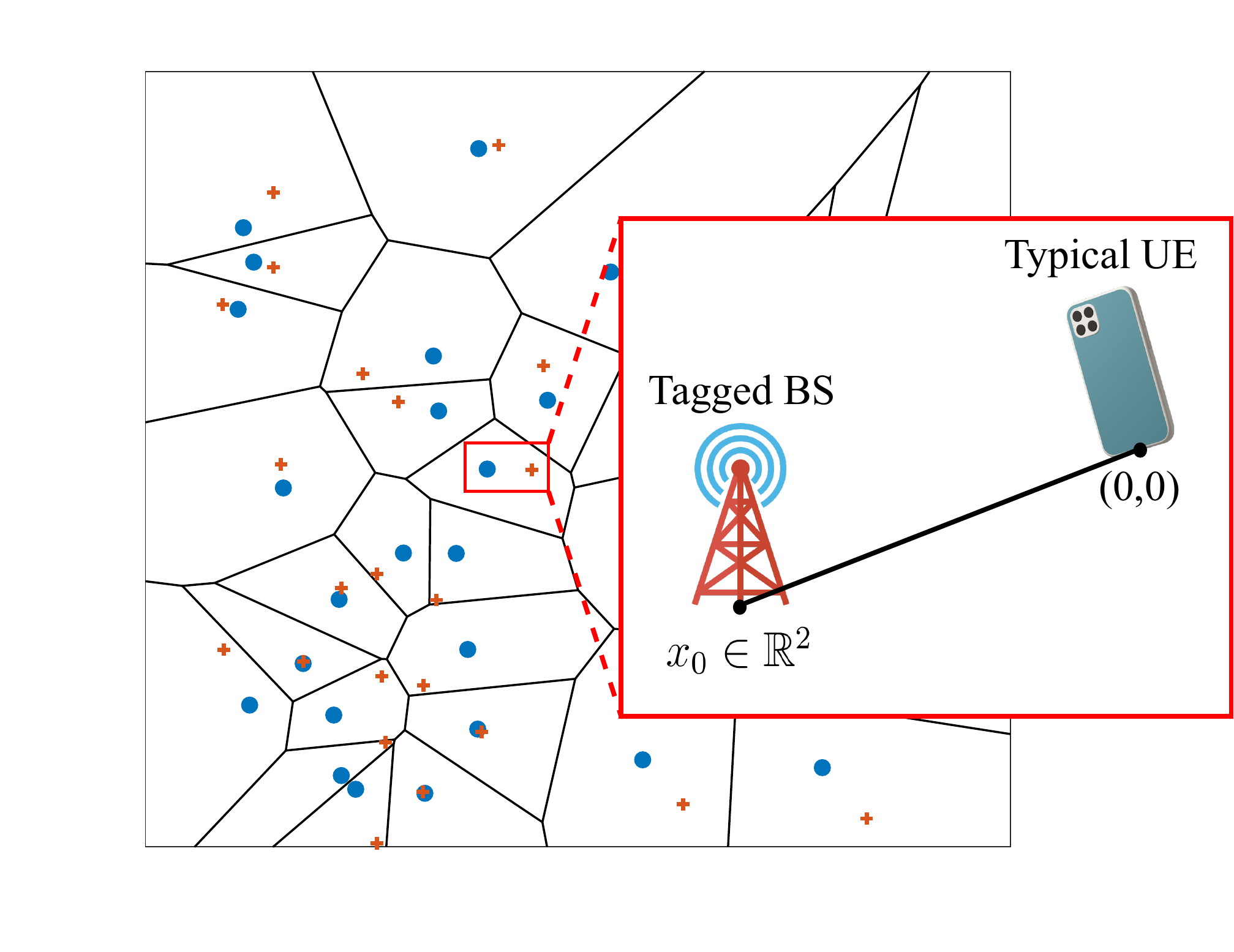}\vspace{-20pt}
	\caption{The Voronoi tessellation of a large-scale FA-enabled cellular network, where BSs and UEs are represented by circles and crosses, respectively.}\label{Voronoi}\vspace{-20pt}
\end{figure}
We consider a single-tier downlink cellular network. The locations of the BSs are modeled as points of a homogeneous Poisson point process (PPP), denoted as $\Phi=\left\lbrace x_i\in\mathbb{R}^2,i\in\mathbb{N}^+\right\rbrace $, of spatial density $\lambda_b$ BS/$\rm{m^2}$. Furthermore, the locations of the UEs follow an arbitrary independent point process $\Psi$ with spatial density $\lambda_u\gg  \lambda_b$. With the aim of facilitating the analysis and reducing the required computational complexity of such network deployments, we consider different practical assumptions in order to derive ``close'' to closed-form expressions for the achieved network performance and thus obtain insights into how key system parameters affect the performance. More specifically, we assume that all BSs are equipped with a single omnidirectional antenna, while all UEs are equipped with $M$ FAs \footnote{Although the employment of a large number of FAs could be implemented in some end-user devices e.g., laptops, tablets, and handhelds, the limited size of the majority of UEs raises some practicality and feasibility issues in the context of conventional sub-$6$ GHz communication networks. Fortunately, when moving to higher frequencies e.g., millimeter-wave or terahertz, the small wavelength associated with the higher frequency signals can be exploited, enabling a large FA array to be packed in a small physical size \cite{BOG}.}, where $M\in\mathbb{N}^+$ (detailed description in Section \ref{FAmodel}). Regarding the adopted multiple access scheme, we assume the employment of an orthogonal multiple access technique, e.g. time division multiple access, such that each BS serves a single UE at the time without the existence of intra-cell interference. Without loss of generality and by following Slivnyak's theorem \cite{SGbook}, the analysis concerns the typical UE located at the origin but the results hold for all UEs of the network. We consider the nearest-BS association rule i.e., the typical UE at the origin communicates with its closest BS located at $x_{\rm 0}\in\mathbb{R}^2$, referred as \textit{tagged BS}, and its link with the typical UE is denoted as \textit{typical link} (see Fig. \ref{Voronoi}).

\subsection{Fluid antenna model}\label{FAmodel}
\begin{figure}
	\centering
	\begin{subfigure}{.4\textwidth}
		\centering
		\includegraphics[width=\linewidth]{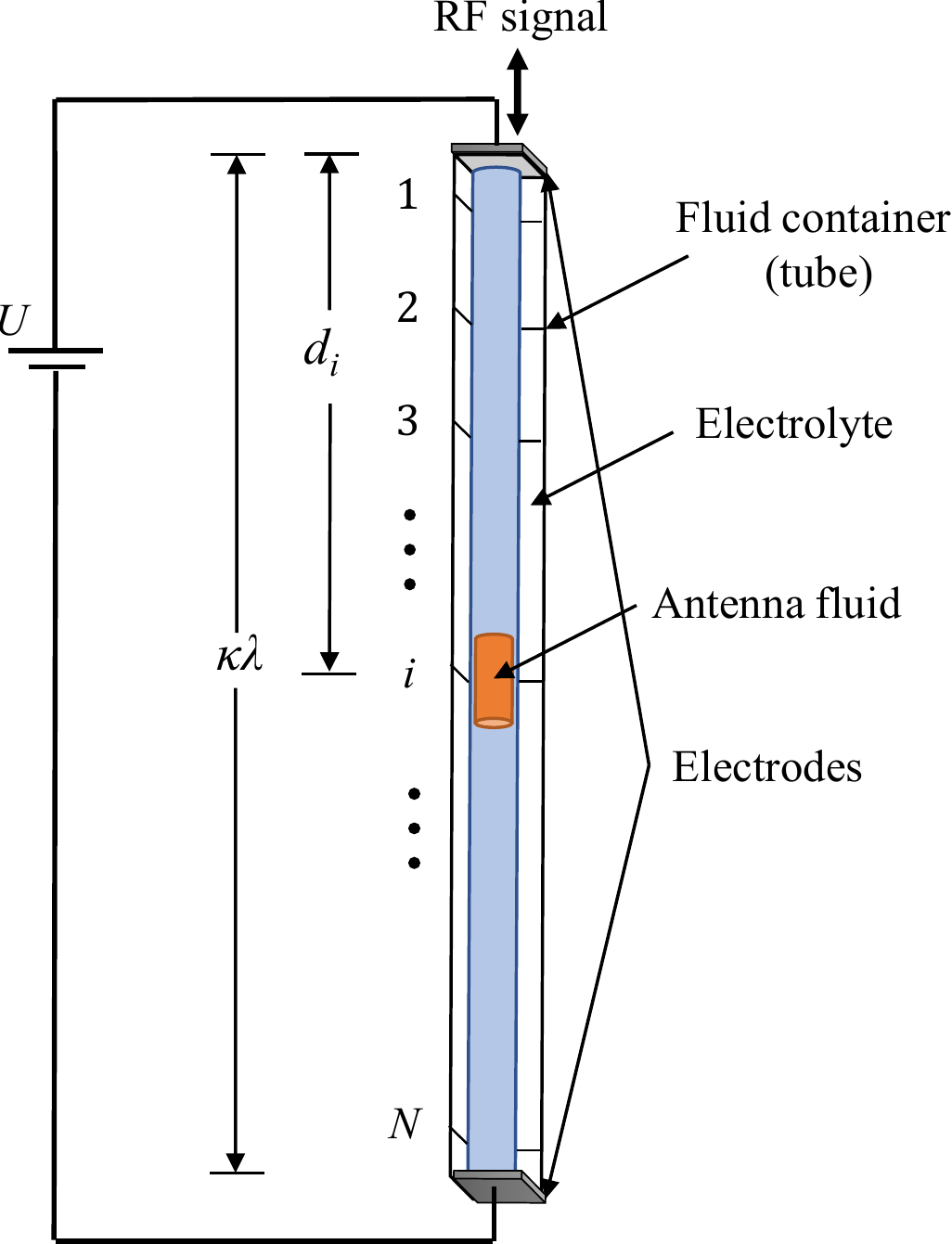}
		\caption{A potential FA architecture.}\label{FA}
	\end{subfigure}%
	\begin{subfigure}{.6\textwidth}
		\centering
		\includegraphics[width=.87\linewidth]{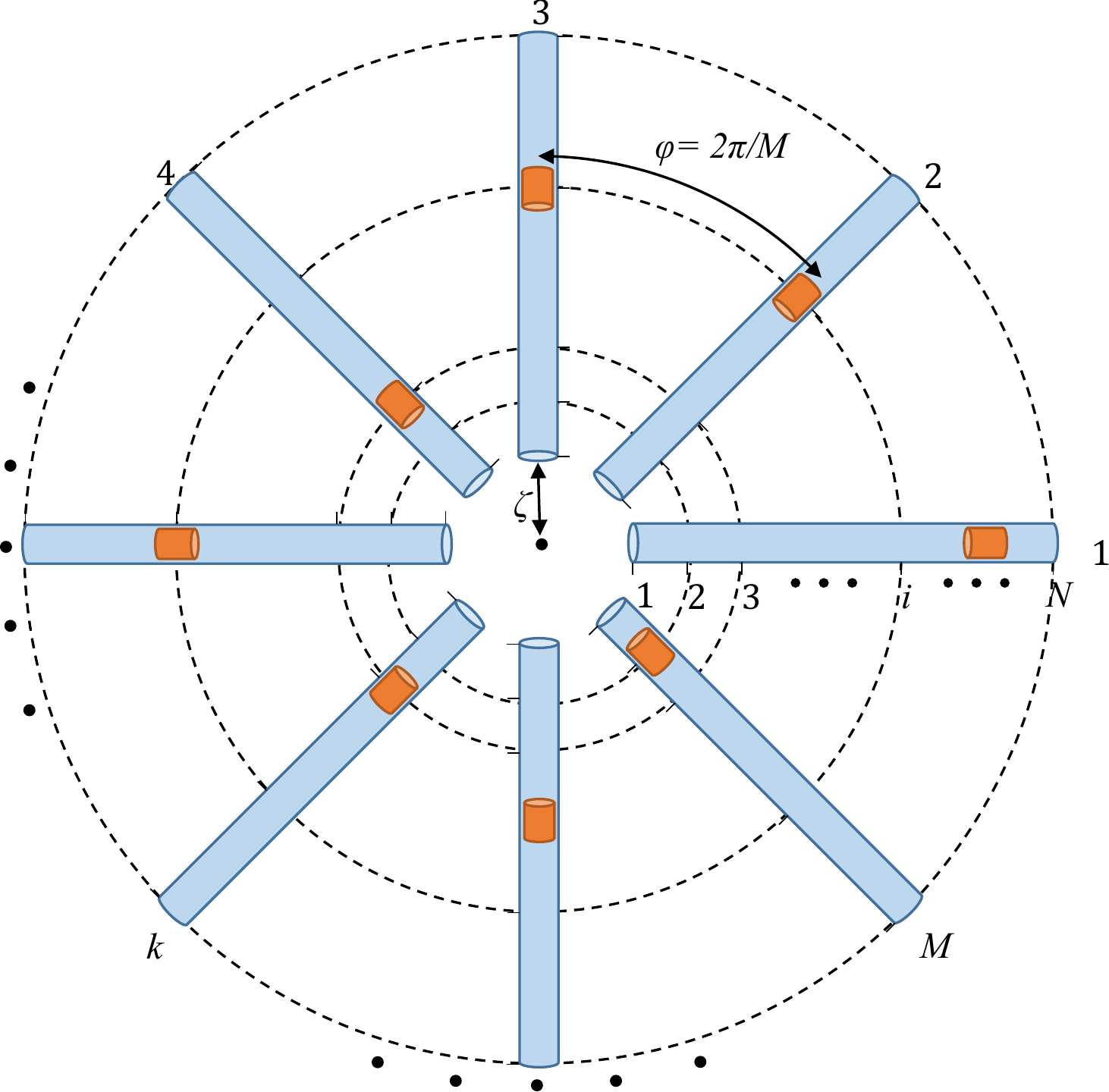}
		\caption{The investigated circular $M$-FA array equipped by each UE.}\label{MultiFA}
	\end{subfigure}
	\caption{The considered FA concept.}
	\label{fig:test}
\end{figure}
Fig. \ref{FA} illustrates the considered architecture of a fluid-based reconfigurable antenna. In the figure, a drop of liquid metal (e.g., EGaIn) is placed in a tube-like linear microchannel or capillary filled with an electrolyte, within which the fluid is free to move. In particular, the location of the antenna (i.e., fluid metal) can be promptly switched to one of the $N$ preset locations (also known as ``ports''), that are evenly distributed along the linear dimension of a FA, $\kappa\lambda$, where $\lambda$ is the wavelength of communication and $\kappa$ is a scaling constant. An abstraction of the FA concept is considered, where an antenna at a given location is treated as an ideal point antenna \cite{WON1}. By aiming to ease the mathematical analysis, the first port of a FA is treated as an auxiliary reference port \cite{WON1}. Thus, as demonstrated in Fig. \ref{FA}, the distance between the reference port and the $i$-th port can be measured as follows
\begin{equation}
d_i = \left(\frac{i-1}{N-1}\right)\kappa\lambda,\ \forall\ i\in\mathcal{N},
\end{equation}
where $\mathcal{N}=\{1,2,\dots,N\}$.
%\begin{figure}
%	\centering\includegraphics[width=.5\linewidth]{FAsingleNew.pdf}
%	\caption{The considered FA concept.}\label{FA}
%\end{figure}
%\begin{figure}
%	\centering\includegraphics[width=.5\linewidth]{MultiFA.pdf}
%	\caption{The considered FA concept.}\label{MultiFA}
%\end{figure}

We assume the existence of a MEMS at each UE, which is responsible for the flow motion of a fluid metal within an electrolyte-filled capillary. More specifically, the flow motion of the fluid metal is induced by the application of a voltage gradient along the FA, as shown in Fig. \ref{FA}, as a result of the electrocapillary effect \cite{LEE}. This particular phenomenon, called continuous electrowetting \cite{GOU}, is an electrical analogy of the well-known Marangoni effect \cite{GOU}. Therefore, according to Hagen–Poiseuille equation \cite{LEE}, the average velocity achieved by a fluid metal is given by
\begin{equation}\label{velocity}
u=\frac{q}{6\mu}\frac{D}{L}\Delta\phi,
\end{equation}
where $q$ denotes the initial charge in the electrical double layer for EGaIn in most of the aqueous electrolytes, $\mu$ is the viscosity of EGaIn at $20^oC$, $D$ and $L$ represent the thickness and the length of the fluid metal, respectively. $\Delta\phi$ represents the voltage difference between the two ends of the fluid metal which is much smaller than the externally applied voltage $U$ i.e., $\Delta\phi\ll U$. Therefore, in contrast to existing works that consider high-velocity fluid metals with instantaneous port switching, the assumption of fluid metals with finite velocity requires a non-zero time period for the movement of the fluid metal from a port to another, referred as ``delay''. In particular, the time required (delay) by the fluid metal to move from the $i$-th port to the $j$-th port, is given by
\begin{equation}\label{delay}
\delta(x) =\frac{\kappa\lambda}{u}\left(\frac{x}{N-1}\right),
\end{equation}
where $\{i,j\}\in\mathcal{N}$ and $x=|i-j|$. For example, a fluid metal velocity of $116$ mm/s is obtained by assuming $\Delta\phi=0.1$ V and $L/D=5$ \cite{GOU}, and hence, a delay of $54\ \mu$s is experienced between neighbouring ports (i.e., $x=1$) of a FA architecture with $N=20$, $\kappa = 0.2$, and $\lambda=6$ cm \cite{WON1}.

By aiming at achieving circular symmetry and thereby analytical tractability of our proposed mathematical framework, the geometry of the considered $M$-FA array that is adopted by all UEs is depicted in Fig. \ref{MultiFA}\footnote{The proposed mathematical framework can be applied to more sophisticated FA-array topologies but this topology is sufficient for the purpose of this work.}. More specifically, the $M$ FAs are placed in a circular configuration with $2\pi/M$ angle between the adjacent ones, and we assume that all ports of the $M$ FAs share a common RF chain. Furthermore, we assume that the reference port (i.e., first port) of each FA is located towards the center of the adopted FA array. For simplicity purposes, we assume that the distance between the $M$ reference ports and the center of the considered circular $M$-FA array is zero i.e., $\omega=0$, without loss of generality and accuracy of the proposed mathematical framework. Therefore, the distance of the typical link between the $i$-th port of the $k$-th FA and the tagged BS can be calculated as
\begin{equation}\label{Distance}
	r_i(\rho) = \sqrt{\rho^2+\kappa^2\lambda^2\left(\frac{i-1}{N-1}\right)^2},\ \forall\ i\in\mathcal{N},\ k\in\mathcal{M},
\end{equation}
where  $\mathcal{M}=\{1,2,\dots,M\}$ and $\rho$ represents the distance of the typical link, i.e. $\rho=\|v_0\|$ with probability density function (pdf) $f_R(\rho)$. The above pdf can be derived by differentiating the complementary cumulative distribution function of $R$ i.e., $\mathbb{P}[R>\rho]=\exp\left(-\pi\lambda_b\rho^2\right)$, with respect to $\rho$, that can be calculated as \cite{SGbook}
\begin{equation}\label{ContactPDF}
	f_R(\rho) = \frac{{\rm d}\mathbb{P}[R>\rho]}{{\rm d}\rho}=2\pi\lambda_b\rho\exp\left(-\pi\lambda_b\rho^2\right).
\end{equation}
Note that, owing to the circular symmetry achieved with the proposed $M$-FA array deployment, the distances between the tagged BS and the $i$-th ports of all FAs are equal i.e., $r_i(\rho)$ is independent of $k\in\mathcal{M}$.

\subsection{Channel model}
All wireless signals are assumed to experience both large-scale path-loss effects and small-scale fading. More specifically, all BSs are assumed to use the same transmission power $P$ (dBm). The large-scale attenuation of the transmitted signals follows an unbounded singular path-loss model based on the distance $r$ between a transmitter and a receiver, i.e. $\ell(r)=r^{a}$, where $a>2$ denotes the path-loss exponent. Regarding small-scale fading, we consider a block fading channel model. In other words, the channel remains constant during a coherence time $T_c$, also known as \textit{channel coherence interval}, and evolves independently from block to block. In particular, we assume that the small-scale fading between a transmitter and a receiver follows a circularly symmetric complex Gaussian distribution with zero mean and variance of $\sigma^2$. Hence, the channel's amplitude between the $i$-th port of the typical UE's $k$-th FA and its tagged BS, $\left| g_{{\rm 0}i}^{(k)}\right| $, it is Rayleigh distributed. While such an assumption clearly does not reflect a real network environment, it still enables us to obtain some closed-form expressions which may be used as estimates for more realistic situations. Since the ports of each FA can be arbitrarily close to each other, the channels are considered to be correlated. In particular, the channels observed by the $N$ ports at the $k$-th FA of the typical UE can be evaluated as
\begin{equation}\label{ChannelCorrelation}
	g_{{\rm 0}i}^{(k)}=\begin{cases}
		\sigma \alpha_1+j\sigma \beta_1 &\text{if $i=1$},\\
		\sigma\left(\sqrt{1-\mu_i^2}\alpha_i+\mu_i \alpha_0\right)+j\sigma\left(\sqrt{1-\mu_i^2}\beta_i+\mu_i \beta_0\right) &\text{otherwise},\\
	\end{cases}
\end{equation}
where $i\in\mathcal{N}$, $k\in\mathcal{M}$, $\alpha_1,\dots,\alpha_N,\beta_1,\dots,\beta_N$ are all independent Gaussian random variables with zero mean and variance of $\frac{1}{2}$; $\mu_i$ is the autocorrelation parameter that can be chosen appropriately to determine the channel correlation between the $i$-th and the reference (i.e., first) port of each FA. In particular, we assume that the autocorrelation parameter can be evaluated as \cite{WON1}
\begin{equation}
	\mu_i = \begin{cases}
		0&\text{if $i=1$},\\
		J_0\left(\frac{2\pi(i-1)}{N-1}\kappa\right) &\text{otherwise},\\
\end{cases}
\end{equation}
where $J_0(\cdot)$ is the zero-th Bessel function of the first kind. In addition, we assume that there is no channel correlation between ports that belong into different FAs, as we assume that a $\lambda/2$-separation is ensured between adjacent FAs.
\section{Port Selection with LMMSE Channel Estimation}
\begin{figure}
	\centering\includegraphics[width=0.8\linewidth]{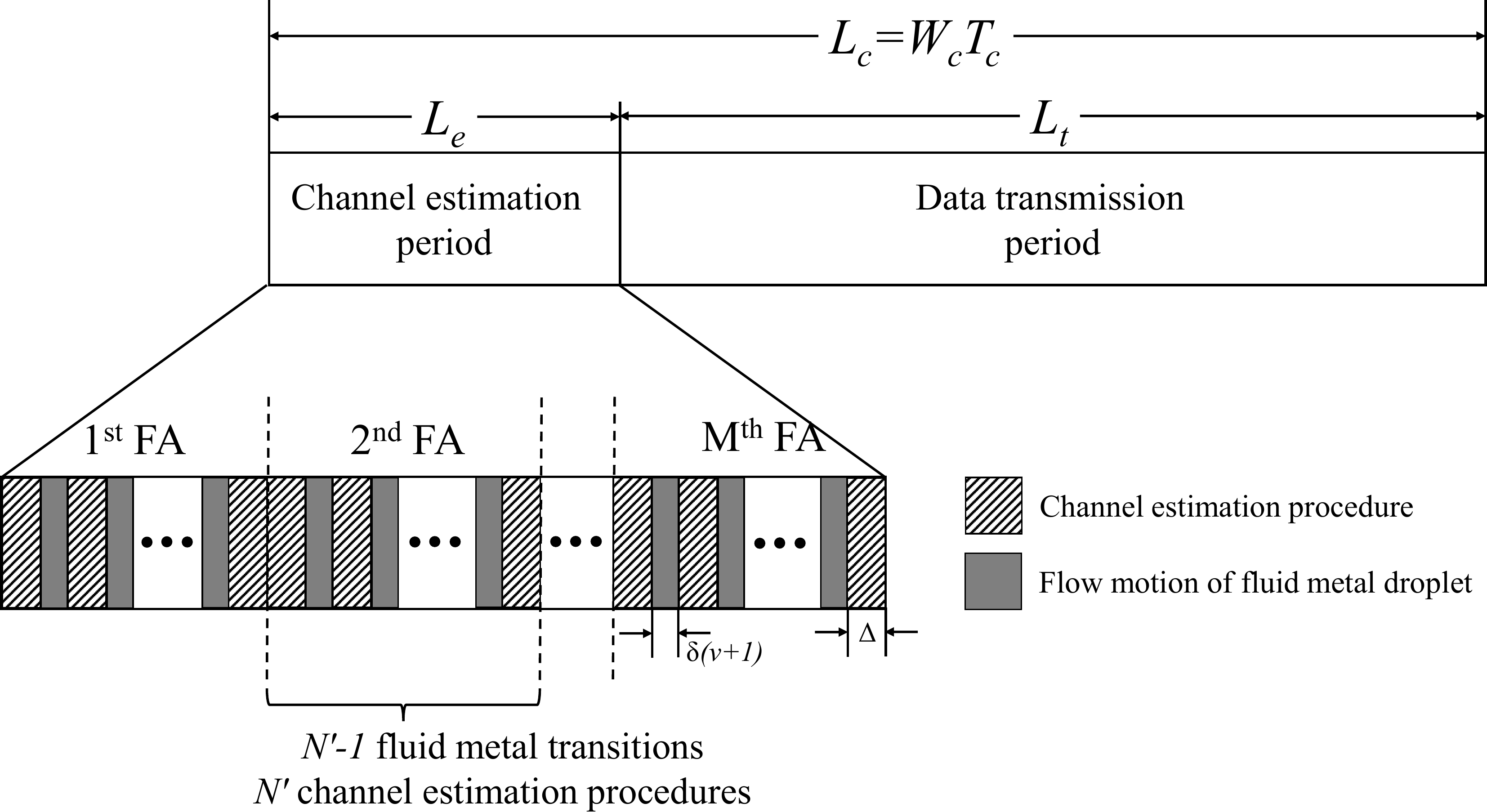}
	\caption{Representation of a block fading channel consisting of channel estimation and data transmission periods.}\label{Time}\vspace{-20pt}
\end{figure}
In this section, we introduce the proposed low-complexity port selection (PS) scheme and the associated skip-enabled LMMSE channel estimation (SeCE) for a scenario with a limited coherence interval. Based on the adopted block fading model, the length of each coherence interval/block ($L_c$ channel uses) is equal to the product of the coherence bandwidth $W_c$ in Hz and the coherence time $T_c$ in s \cite{Tsebook}. Each coherence block is divided into two sub-blocks for channel estimation and data communication, as illustrated in Fig. \ref{Time}. In particular, we consider $L_e$ channel uses per block for pilot-training symbols to enable channel estimation, and $L_t=L_c-L_e$ channel uses are used for downlink (DL) data transmission. Throughout this paper, we will denote by $n$ the channel use during the channel coherence block, i.e. $n\in\{1,\dots,L_c\}$. In the following sections, we elaborate in detail the channel estimation and data transmission periods in the context of the proposed SeCE and PS techniques.

\subsection{Channel estimation period}
%Motivated by the large number of ports in the small space of a UE's FA, and therefore the existence of a huge mutual coupling between the FA's ports, we propose a 
In the context of the considered limited coherence interval scenario, we propose a novel LMMSE-based channel estimation technique, namely SeCE technique, where the channel estimation between the UEs' ports and their serving BSs is performed via pilot-training symbols in a sequential manner. The pilot-training symbols, which are initially known at both the BS and the UEs in the same cell, are broadcasted by each BS to its connecting UEs\footnote{Different mutually orthogonal pilot sequences are used by the BSs in all cells, and thus pilot contamination is neglected.}. Contrary to the conventional approaches, where channel estimation is performed for all FAs' ports, the proposed SeCE technique requires a channel estimation process for only a subset of ``selected'' ports for each FA. The proposed technique is motivated by the fact that, in practice, the number of ports, $N$, can be very large, and the estimation of $\left|g_{0i}^{(k)}\right|$ for all FAs' ports is infeasible under the considered limited coherence interval scenario. Fig. \ref{Prop} demonstrates our proposed SeCE technique which is sequentially employed by all FAs of the UEs. Initially, the serving BS  broadcasts pilot-training symbols to the reference (i.e., first) port, and their link's channel coefficients are estimated by employing a LMMSE channel estimation technique. Then, by exploiting the strong spatial correlation between adjacent FA's ports, the channel estimation process related to  the subsequent $\nu$ ports is skipped, by assuming that the channels for these ports are equal to the estimated channel of the previously ``selected'' port. Hence, the proposed SeCE technique proceeds with the channel estimation process of the link between the serving BS and the $(\nu+2)$-th port. The previous process is repeated until channel estimation for all links between the serving BS and the FA's ports is obtained. For each FA, the set defined by the ports for which the serving BS performs LMMSE channel estimation, represents the set of ``selected'' FA's ports i.e., $\mathfrak{N}\subseteq\mathcal{N}$. Therefore, as illustrated in Fig. \ref{SeCE}, our proposed technique requires channel estimation for only $N'=\lceil\frac{N}{\nu+1}\rceil$ links for each FA, opposed to the conventional techniques (illustrated in Fig. \ref{Conv}) that require channel estimation for $N$ links. The reduction of the required channel estimation coefficients becomes even more apparent in the considered multi-FA environments, resulting in a significantly mitigated signaling overhead among the BSs and the UEs. On the other hand, the reduced channel resolution caused by the sparse channel estimation process, jeopardizes the observed SINR. Hence, the employment of our proposed SeCE technique triggers a non-trivial trade-off between mitigating the signaling overhead related to the channel estimation process and reducing the observed SINR. It is important to mention here that, by considering $\nu=0$, our proposed channel estimation scheme becomes the conventional LMMSE channel estimation scheme, where channel estimation is performed for all FAs' ports, and referred as \textit{Skip-less CE technique}.
\begin{figure}
	\centering
	\begin{subfigure}{.5\textwidth}
		\centering
		\includegraphics[width=.8\linewidth]{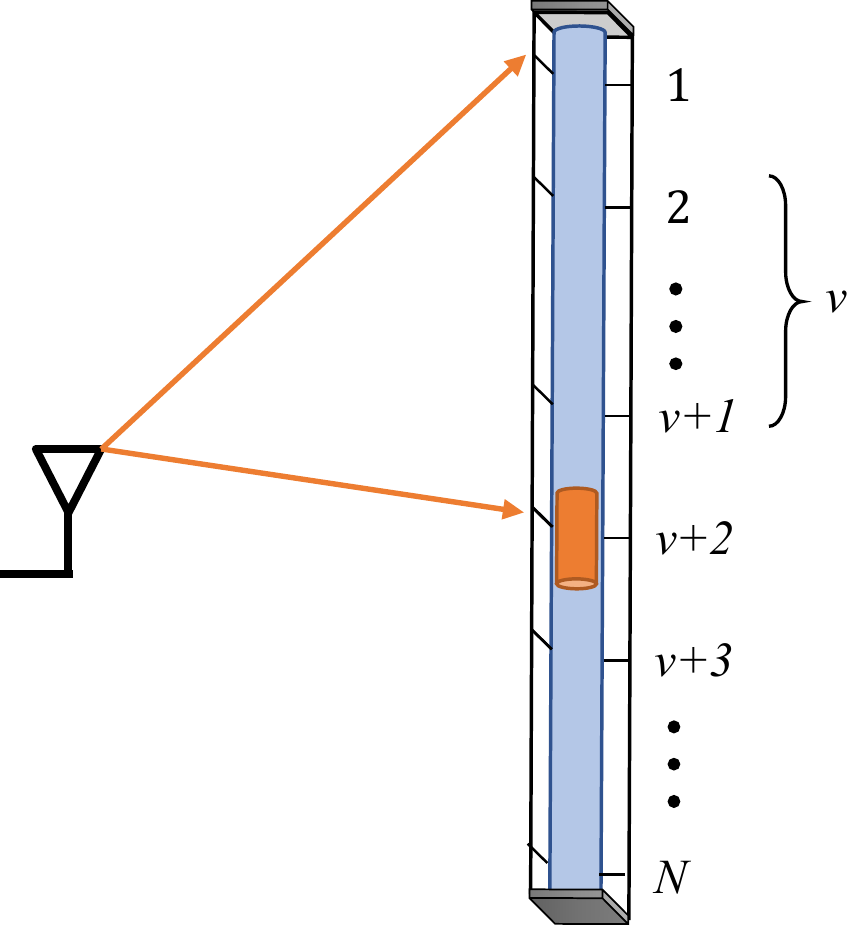}
		\caption{The proposed SeCE technique.}\label{Prop}
	\end{subfigure}%
	\begin{subfigure}{.5\textwidth}
		\centering
		\includegraphics[width=.69\linewidth]{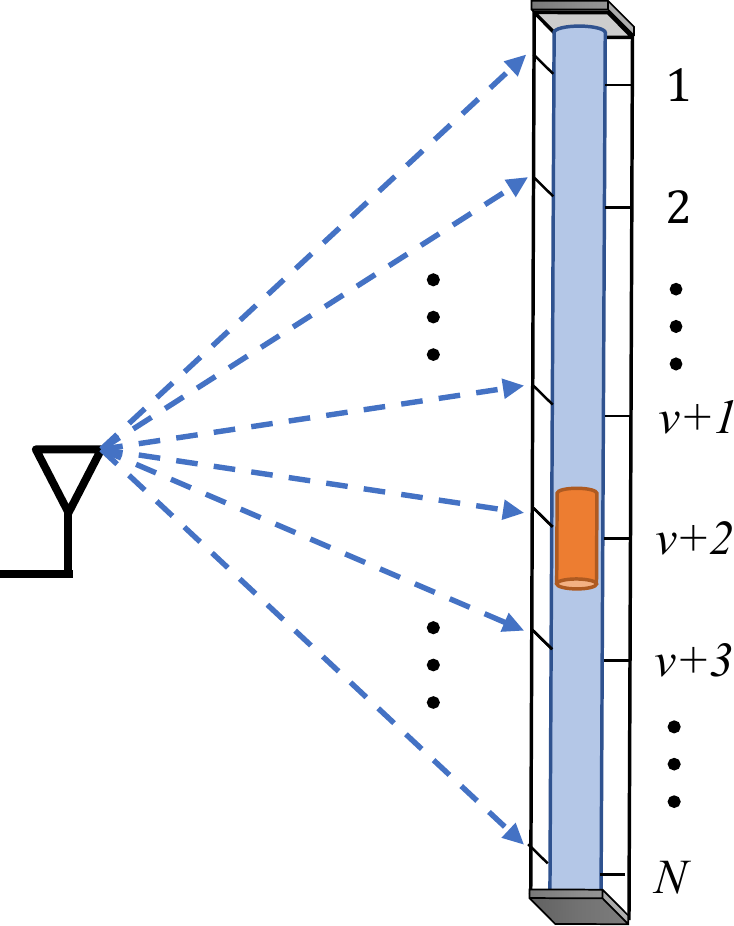}
		\caption{The conventional Skip-less CE technique.}\label{Conv}
	\end{subfigure}
	\caption{Comparison of Skip-less CE and SeCE techniques.}
	\label{SeCE}\vspace{-20pt}
\end{figure}

By taking into account the FA model in Section \ref{FAmodel}, a portion of the channel uses allocated for the first phase (i.e., channel estimation period) is devoted to the flow motion of the fluid metal between the various ports of a FA, also known as \textit{switching channel uses} and denoted as $l_s$. This reflects the fact that a fluid metal moves at a finite velocity within the electrolyte-filled capillary of a FA, in contrast to existing works that assume high velocity fluid metals that leads to instantaneous port switching i.e., $l_s=0$. In the context of our proposed SeCE technique, the fluid metal of each FA performs $N'-1$ transitions of duration $\delta(\nu+1)$ between adjacent ports that belong in the set of ``selected'' FA's ports (see Fig. \ref{Prop}). Hence, in the context of the adopted circular $M$-FA array, the required switching channel uses are equal to $l_s=M(N'-1)\delta(\nu+1)W_c$. The key idea of the adopted channel estimation process is to divide the remaining channel estimation period, i.e. $L_e-l_s$, into $N'M$ symmetric segments of $\Delta=\frac{L_e-l_s}{N'M}$ consecutive symbols (see Fig. \ref{Time}). During each segment, the channel between a single FA port and the serving BS is estimated\footnote{More sophisticated vector-based channel estimation schemes can be used but this estimation process is sufficient for the purpose of this work.}. Therefore, the baseband equivalent received pilot signal at the $i$-th port of the $k$-th FA of the typical UE, is given by
\begin{equation}
\mathbf{y}_i^{(k)} = \sqrt{\frac{\Delta P}{\ell(r_i(\rho))}}g_{{\rm 0}i}^{(k)} \mathbf{X}_0^{(k)}+\sum_{\substack{j\in\mathbb{N}^+ \\ x_j\in\Phi\backslash x_{\rm 0}}}\sqrt{\frac{P}{\ell\left(\|x_j\|\right)}}g_{ji}^{(k)}\mathbf{X}_j+\eta_0,
\end{equation}
where $\mathbf{X}_0$ is a deterministic $\Delta\times 1$ training symbol vector satisfying $\mathbf{X}_0^\dagger \mathbf{X}_0=1$ \cite{KAY}, $\mathbf{X}_k\stackrel{d}{\sim}\mathcal{CN}\left(\mathbf{0}_{\Delta\times 1},\mathbf{I}_{\Delta}\right)$ is a transmission symbol vector from node $k$, $g_{ji}^{(k)}\stackrel{d}{\sim}\mathcal{CN}(0,1)$ depicts the channel between the typical UE and the $k$-th interfering BS, $\eta_0\stackrel{d}{\sim}\mathcal{CN}\left(\mathbf{0}_{\Delta\times 1},N_0\mathbf{I}_{\Delta}\right)$ is the additive white Gaussian noise (AWGN) vector, and  $X\stackrel{d}{\sim}Y$ implies that X \textit{is distributed as} Y. Note that $g_{{\rm 0}i}^{(k)}$, $g_{ji}^{(k)}$, $\mathbf{X}_0$, $\mathbf{X}_j$, and $\eta_0$ are independent between each other. Hence, the observation scalar signal $y_i^{(k)}=\mathbf{X}_0^\dagger\mathbf{y}_i^{(k)}$ at the $i$-th port of the $k$-th FA of the typical UE can be evaluated as
\begin{equation}
	y_i^{(k)}=\sqrt{\frac{\Delta P}{\ell(r_i(\rho))}}g_{{\rm 0}i}^{(k)}+\sum_{\substack{j\in\mathbb{N}^+ \\ x_j\in\Phi\backslash x_{\rm 0}}}\sqrt{\frac{P}{\ell\left(\|x_j\|\right)}}g_{ji}^{(k)}q_j+z_0,
\end{equation}
where $q_j=\mathbf{X}_0^\dagger\mathbf{X}_j\stackrel{d}{\sim}\mathcal{CN}(0,1)$ and $z_0=\mathbf{X}_0^\dagger \eta_0\stackrel{d}{\sim}\mathcal{CN}(0,N_0)$.

By using the low-complexity LMMSE estimator, which is optimal among the class of linear estimators, the estimate of $g_{{\rm 0}i}^{(k)}$ conditioned on $\rho$, is given by 
\begin{equation}\label{VarianceProof}
\widehat{g}_{{\rm 0}i}^{(k)}|_{\rho}=\frac{\sqrt{\frac{\Delta P}{\ell(r_i(\rho))}}\sigma^2}{\frac{\Delta P}{\ell(r_i(\rho))}\sigma^2+N_0+P\mathbb{E}(\mathcal{I})}y_i^{(k)},
\end{equation}
where $\mathcal{I}$ denotes the instantaneous power of the interference at the typical UE, and is given by 
\begin{equation}\label{Interference}
	\mathcal{I}=\sum_{\substack{j\in\mathbb{N}^+ \\ x_j\in\Phi\backslash x_{\rm 0}}}\frac{1}{\ell\left(\|x_j\|\right)}\left| g_{ji}^{(k)}\right|^2 .
\end{equation}
Based on the Campbell's Theorem \cite{SGbook} and conditioning on $\rho$, $\mathbb{E}(\mathcal{I})$ can be expressed as
\begin{equation}\label{MeanInterference}
	\mathbb{E}(\mathcal{I}) = 2\pi\lambda_b\mathbb{E}\left[\left| g_{ji}^{(k)}\right|^2 \right]\int_{r_i(\rho)}^{\infty}r^{1-a}{\rm d}r=2\pi\lambda_b\sigma^2\frac{r_i(\rho)^{2-a}}{a-2},
\end{equation}
where $r_i(\rho)$ depicts the distance of the typical link between the $i$-th port of a FA and the tagged BS, which is given by \eqref{Distance}. 

The channel estimation error can then be derived as $e_i|_{\rho} = g_{{\rm 0}i}^{(k)}-\widehat{g}_{{\rm 0}i}^{(k)}|_{\rho}$, where $e_i|_{\rho}\stackrel{d}{\sim}\mathcal{CN}\left(0,\sigma^2_{e}|_{\rho}\right)$, and $\widehat{g}_{{\rm 0}i}^{(k)}|_{\rho}$ and $e_i|_{\rho}$ are uncorrelated \cite{KAY}. An explicit expression for the variance of the channel estimation conditioned on $\rho$, $\sigma^2_{e}|_{\rho}= \mathbb{E}\left[\big(g_i-\hat{g}_i|_{\rho}\big)^2\right]$, is given in Section \ref{PerformanceAnalysis}.

\subsection{Data transmission period}
Regarding the data transmission period (i.e., for the rest of the coherence block after the channel estimation process), we propose a low-complexity PS technique which is based on a two-stage procedure. In the first stage, the set of $M$ ``candidate'' ports for each UE is formed, that consists of the pre-selected ports with which the serving BS is able to communicate. More specifically, a single port is chosen among the set of ``selected'' ports from each FA, i.e. $i\in\mathfrak{N}$, that is estimated to provide the strongest channel in order to have the best reception performance. Hence, the $k$-th FA's location of the typical UE is switched to the port that satisfies 
\begin{equation}\label{Association}
	\mathtt{i}_k=\argmax_{i\in\mathfrak{N}}\left\lbrace \Big|\widehat{g}_{{\rm 0}i}^{(k)}\Big|\right\rbrace, \quad\forall k\in\mathcal{M}.
\end{equation}
The set defined by the pre-selected ports from all FAs, represents the set of ``candidate'' ports i.e., $\mathcal{C}=\{\mathtt{i}_{k}:\ k\in\mathcal{M}\}$. At the second stage, each UE selects from the $M$ ``candidate'' ports, the one which provides the highest SINR, while the rest of the ``candidate'' ports are ignored. The low complexity of the proposed technique stems from the fact that a single RF chain is required since the UEs connect to their serving BSs via a single port, opposed to the conventional maximum ratio combining (MRC) approach in which a UE communicates with its serving BS with $M$ ports, demanding the existence of $M$ RF chains \cite{NIG}. This is where the novelty of our proposed technique is highlighted. Particularly, by applying the proposed technique, the complexity of implementing port diversity schemes in multi-FA communication networks can be significantly decreased.

For simplicity, we consider that both the data and the pilot-training symbols are transmitted with the same power $P$ (dBm). Thus, the received signal at the $\mathtt{i}_k$-th pre-selected port of the $k$-th FA of the typical UE during the $n$-th channel use, is given by
\begin{equation}\label{DataTransmission}
	d_{\mathtt{i}_k}[n] = \sqrt{\frac{1}{r_{\mathtt{i}_k}^a(\rho)}}\widehat{g}_{{\rm 0}\mathtt{i}_k}^{(k)} s_{\rm 0}[n]+\sqrt{\frac{1}{r_{\mathtt{i}_k}^a(\rho)}}e_{\mathtt{i}_k} s_{\rm 0}[n]+\sum_{\substack{j\in\mathbb{N}^+ \\ x_j\in\Phi\backslash x_{\rm 0}}}\sqrt{\frac{1}{\|x_j\|^a}}g_{j\mathtt{i}_k}^{(k)}s_j[n]+\eta_0[n],
\end{equation}
where $n\in\{L_e+1,\dots, L_c\}$, $s_0[n]$ and $s_k[n]$ represent independent Gaussian distributed data symbols from the tagged and the $k$-th interfering BS, respectively, satisfying $\mathbb{E}\left[|s_0[n]|^2\right]=P$ and $\mathbb{E}\left[|s_j[n]|^2\right]=P$; $\eta_0[n]\stackrel{d}{\sim}\mathcal{CN}(0,N_0)$ is AWGN. Note that, the first term of \eqref{DataTransmission} is known at the receiver, while the remaining terms are unknown and are treated as noise. Therefore, an estimate of $s_0[n]$ can be formulated as $\hat{s}_0[n]=\sqrt{r_{\mathtt{i}_k}^a(\rho)}\frac{\left(\widehat{g}_{{\rm 0}\mathtt{i}_k}^{(k)}\right)^*}{\Big|\widehat{g}_{{\rm 0}\mathtt{i}_k}^{(k)}\Big|^2}d_{\mathtt{i}_k}[n]$, from which the SINR observed at the $\mathtt{i}_k$-th port can be written as
\begin{equation}\label{SINR}
	{\rm SINR}_{\mathtt{i}_k} = \frac{\frac{\epsilon}{r_{\mathtt{i}_k}^a(\rho)}\left| \widehat{g}_{{\rm 0}\mathtt{i}_k}^{(k)}\right|^2}{\sum\limits_{\substack{j\in\mathbb{N}^+ \\ x_j\in\Phi\backslash x_0}}\frac{\epsilon}{\|x_j\|^a}\left| g_{ji}^{(k)}\right|^2+\frac{\epsilon}{r_{\mathtt{i}_k}^a(\rho)}\sigma^2_e|_{\rho}+1},
\end{equation}
where $\epsilon=\frac{\sigma^2P}{N_0}$ is the transmit signal-to-noise ratio (SNR).

\section{Outage Performance for SeCE and PS Techniques}\label{PerformanceAnalysis}
In this section, we analytically evaluate the outage performance of a DL homogeneous cellular network in the context of the proposed SeCE and PS techniques. Initially, we evaluate the variance of the channel estimation under the proposed SeCE technique and study the statistical properties of both the estimated channels under the LMMSE estimator and the induced multi-user interference. Finally, the outage performance of the considered system model is evaluated, by leveraging tools from SG.

\subsection{Preliminary results}
In this section we state some preliminary results, which will assist in the derivation of the main analytical framework. To begin with, the variance of the channel estimation under the proposed SeCE technique, conditioned on $\rho$, is evaluated in the following proposition.
\begin{proposition}\label{Proposition1}
	By employing the SeCE technique, the variance of the channel estimation error observed at the $i$-th port of the typical UE's $k$-th FA, conditioned on $\rho$, is given by 
	\begin{equation}\label{NoiseVarianceEq}
	\sigma^2_{e}|_{\rho} =\left(1+\frac{\Delta}{\frac{r_i^a(\rho)}{\epsilon}+2\pi\lambda_b\frac{r_i^{2}(\rho)}{a-2}}\right)^{-1}, 
	\end{equation} 
	where $\Delta=\frac{L_e-l_s}{N'M}$ and $l_s=M(N'-1)\delta(\nu+1)W_c$.
\end{proposition}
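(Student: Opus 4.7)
The plan is to invoke the standard scalar LMMSE error-variance formula and then specialize it to the observation $y_i^{(k)}$ obtained after despreading by the pilot vector $\mathbf{X}_0$. For an observation model $y = hg + w$ with zero-mean $g$ of variance $\sigma_g^2$ uncorrelated with a zero-mean effective-noise term $w$ of variance $\sigma_w^2$, the LMMSE estimator achieves the error variance $\sigma_g^2\sigma_w^2/(|h|^2\sigma_g^2+\sigma_w^2)$. Reading off from (10) the effective pilot gain $h=\sqrt{\Delta P/\ell(r_i(\rho))}$ and the channel variance $\sigma_g^2=\sigma^2$, I would lump the AWGN term $z_0$ together with the aggregate multi-user interference $\sum_j \sqrt{P/\ell(\|x_j\|)}\,g_{ji}^{(k)}q_j$ into a single effective-noise term and argue that, because the LMMSE estimator is determined by second-order statistics only, this aggregate term may be replaced by a zero-mean noise of variance $P\,\mathbb{E}(\mathcal{I})$ without loss for the purpose of computing $\sigma_e^2|_\rho$.

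Given that reduction, the task boils down to evaluating $\mathbb{E}(\mathcal{I})$, which has already been done in (12) via Campbell's theorem and equals $2\pi\lambda_b\sigma^2 r_i(\rho)^{2-a}/(a-2)$ conditioned on $\rho$. Substituting this together with $\ell(r)=r^a$ into the LMMSE error-variance formula, and then dividing numerator and denominator by $N_0+P\,\mathbb{E}(\mathcal{I})$, recasts the expression into the $(1+B/A)^{-1}$ form that appears in the statement. A short algebraic rearrangement using $\epsilon=\sigma^2 P/N_0$ pulls the $r_i^a(\rho)$ factor into the denominator of the $\Delta$-fraction and produces the compact form (13), under the normalization $\sigma^2=1$ that is implicit in the statement.

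To close the derivation I would separately justify the stated expression for $\Delta$ by a resource-counting argument on the channel uses. Each of the $M$ FAs must furnish $N'=\lceil N/(\nu+1)\rceil$ LMMSE estimates, so the useful pilot budget $L_e-l_s$ is partitioned evenly into $N'M$ symmetric segments of length $\Delta=(L_e-l_s)/(N'M)$. The switching overhead $l_s$ is obtained by counting the $N'-1$ inter-port transitions that each FA performs between consecutive selected ports (spaced $\nu+1$ indices apart), each incurring a delay $\delta(\nu+1)$ given by (3), multiplied across the $M$ FAs and converted to channel uses via the coherence bandwidth $W_c$.

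The main obstacle, in my view, is not the algebra but the conceptual step of replacing the non-Gaussian PPP-driven interference by an equivalent Gaussian noise of matched variance inside the LMMSE formula. This is legitimate precisely because the LMMSE error variance depends only on the covariance structure of the observation, so neither the heavy tails induced by the PPP nor the dependence of $\mathcal{I}$ on the same shadowing/path-loss profile as the desired link affects the expression beyond their first two moments; any attempt to sharpen this to a true MMSE statement would require explicit distributional information and would not produce the clean closed form stated in the proposition.
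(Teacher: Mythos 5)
Your proposal is correct and follows essentially the same route as the paper: the paper's Appendix A applies the orthogonality principle to write $\sigma_e^2|_\rho = 1-\mathbb{E}\bigl[|\hat{g}_{0i}^{(k)}|_\rho|^2\bigr]$ and evaluates the second moment of the despread observation with the interference contributing through $P\,\mathbb{E}(\mathcal{I})$, which is algebraically identical to the closed-form LMMSE error-variance formula you quote. Your explicit justification for replacing the PPP interference by an equivalent noise of matched second moment, and your resource-counting argument for $\Delta$ and $l_s$, are in fact more complete than the paper's one-line "the final expression follows by evaluating the expectation."
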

\begin{proof}
	See Appendix \ref{Appendix1}.
\end{proof}
Proposition \ref{Proposition1} clearly demonstrates that the variance of channel estimation error is a non-negative increasing concave function with respect to both the number of the FAs $M$ and the number of ports composed in each FA $N$. For the extreme case of infinite number of FAs ports i.e., $M\rightarrow\infty$ and/or $N\rightarrow\infty$, the variance of the channel estimation error (i.e., channel estimation quality) becomes equal to one i.e., $\sigma^2_{e}|_{\rho}\approxeq 1$. As stated in \eqref{SINR}, this inadequate channel estimation quality induces a substantial reduction of the SINR observed by the FA-based UEs, compromising the network performance. Consequently, although the increased number of FA ports initially enhances the receive diversity gain and thereby the network performance, beyond a critical point $N^*\in\mathcal{N}$, a further increase of the number of FA ports leads to an attenuated channel estimation quality, jeopardizing the overall network performance. Based on the aforementioned discussion, the number of FA ports triggers a trade-off between improving the network's outage performance and reducing the channel estimation quality.

By leveraging the flexibility offered by the proposed SeCE scheme in selecting the value of $\nu$, in the following proposition we provide a closed-form expression for the minimum number of skipped ports $\nu^*$ in such a way as to achieve a pre-defined desired variance of CE error.
\begin{proposition}\label{Proposition2}
The minimum number $\nu^*$, which ensures a variance of channel estimation error $\varsigma\in(0,1)$, is equal to
\begin{equation}
\nu^*=\left(\frac{\frac{r_i^a(\rho)}{\epsilon}+\mathbb{E}[\mathcal{I}]}{\frac{L_e}{MN}-\frac{\kappa\lambda W_c}{(N-1)u}}\right)\left(\varsigma^{-1}-1\right),
\end{equation}
where $u$ depicts the average velocity achieved by a fluid metal, and is given by \eqref{velocity}.
\end{proposition}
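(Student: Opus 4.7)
The plan is to invert the closed-form expression for $\sigma_e^2|_\rho$ established in Proposition \ref{Proposition1} and then resolve the resulting equation for $\nu$ by expanding the dependence of $\Delta$ on the skip parameter.

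First, setting $\sigma_e^2|_\rho = \varsigma$ in \eqref{NoiseVarianceEq} and rearranging gives
\begin{equation*}
\Delta = \left(\frac{r_i^a(\rho)}{\epsilon} + 2\pi\lambda_b\frac{r_i^2(\rho)}{a-2}\right)(\varsigma^{-1}-1).
\end{equation*}
The parenthesised factor is precisely the numerator written in the statement of Proposition 2, once one observes from \eqref{MeanInterference} that $r_i^a(\rho)\,\mathbb{E}[\mathcal{I}] = 2\pi\lambda_b r_i^2(\rho)/(a-2)$; the shorthand $\mathbb{E}[\mathcal{I}]$ in the statement simply absorbs the path-loss factor of the typical link.

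Second, I would substitute $\Delta = (L_e - l_s)/(N'M)$ together with $l_s = M(N'-1)\delta(\nu+1)W_c$ and $\delta(\nu+1) = \kappa\lambda(\nu+1)/[(N-1)u]$ from \eqref{delay}. Replacing the ceiling $N' = \lceil N/(\nu+1)\rceil$ by $N' = N/(\nu+1)$, the product $(N'-1)\delta(\nu+1)W_c$ collapses to $\kappa\lambda W_c(N-\nu-1)/[u(N-1)]$, yielding
\begin{equation*}
\Delta = (\nu+1)\left[\frac{L_e}{MN} - \frac{\kappa\lambda W_c}{u(N-1)}\cdot\frac{N-\nu-1}{N}\right].
\end{equation*}
In the operating regime $\nu+1 \ll N$, the factor $(N-\nu-1)/N$ is replaced by $1$, producing the clean factorisation $\Delta \approx (\nu+1)\left[L_e/(MN) - \kappa\lambda W_c/((N-1)u)\right]$.

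Finally, equating the two expressions for $\Delta$ and solving for $\nu+1$, then absorbing the additive $-1$ under the same regime, yields the stated closed-form for $\nu^*$. The ``minimum'' qualifier is then justified by a monotonicity argument: since $\Delta$ is monotonically increasing in $\nu$, the error variance $\sigma_e^2|_\rho$ is monotonically decreasing in $\nu$, so $\sigma_e^2|_\rho = \varsigma$ pinpoints the smallest skip count that guarantees the tolerance. The main obstacle is to justify the two approximations above (the ceiling removal and $(N-\nu-1)/N \approx 1$), both of which are standard in the large-$N$ regime adopted throughout the paper; once they are accepted, the remainder is an elementary algebraic inversion.
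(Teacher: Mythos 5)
Your proposal is correct and follows essentially the same route as the paper, which simply states that the result follows by solving the expression in Proposition \ref{Proposition1} for $\nu$ while relaxing $N'=\lceil N/(\nu+1)\rceil$ to $N/(\nu+1)$. You go further than the paper's one-line argument by making explicit the two additional approximations that are actually needed to land on the stated formula --- replacing $(N-\nu-1)/N$ by $1$ and absorbing the residual $-1$ from solving for $\nu+1$ --- and by supplying the monotonicity argument that justifies the ``minimum'' qualifier, all of which is consistent with the paper's intent.
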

\begin{proof}
	The expression is derived by solving the expression derived in Proposition \ref{Proposition1} with respect to the number of skipped FA ports, and by using the inequality $N'=\lceil\frac{N}{\nu+1}\rceil\geq\frac{N}{\nu+1}$.
\end{proof}
Based on the Proposition \ref{Proposition2}, it can be observed that the minimum number of ports that need to be skipped is a non-negative decreasing convex function with respect to the targeted channel estimation error variance $\varsigma$. Therefore, the demand for higher channel estimation quality i.e., $\varsigma\rightarrow 0$, requires the allocation of more channel uses, and thus pilot-training symbols, for all ``selected'' ports of the UE's FAs. In the context of the considered limited coherence interval scenario, this can be achieved with the larger number of skipped FA's ports, e.g. $ \nu^* \rightarrow N-1$, and hence less channel estimation processes. Contrary, as the required channel estimation quality decreases i.e., $\varsigma\rightarrow 1$, the minimum number of skipped FA's ports also decreases $\nu^*\rightarrow 0$.

Regarding the statistical properties of the estimated channels under the proposed SeCE technique, the joint pdf and cumulative distribution function (cdf) of $|\hat{g}_1|,\dots,|\hat{g}_N|$, conditioned on $\rho$, are given in the following lemma.
\begin{lemma}\label{Lemma1}
	The conditional joint cdf and pdf of $\left| \widehat{g}_{{\rm 0}1}^{(k)}\right|,\dots,\left|\widehat{g}_{{\rm 0}N'}^{(k)}\right|$ for $i\in\mathfrak{N}$ and $k\in\mathcal{K}$, are given by
	\begin{equation}\label{CDF}
		F_{\left| \widehat{g}_{{\rm 0}1}^{(k)}\right|,\dots,\left|\widehat{g}_{{\rm 0}N'}^{(k)}\right|}(\tau_1,\dots,\tau_{N'}|\rho)=\int_0^{\frac{\tau_1^2}{\widetilde{\sigma}_1^2}}\exp\left(-t\right)\prod_{j\in\mathfrak{N}}\left[1-Q_1\left(\sqrt{2\mu_j^2\frac{\widetilde{\sigma}_1^2}{\widetilde{\sigma}_j^2}t},\sqrt{\frac{2}{\widetilde{\sigma}_j^2}}\tau_j\right)\right]{\rm d}t
	\end{equation}
and
\begin{equation}\label{PDF}
	f_{\left| \widehat{g}_{{\rm 0}1}^{(k)}\right|,\dots,\left|\widehat{g}_{{\rm 0}N'}^{(k)}\right|}(\tau_1,\dots,\tau_{N'}|\rho)=\prod_{\substack{i\in\mathfrak{N}\\ (\mu_1\triangleq 0)}}\frac{2\tau_i}{\widetilde{\sigma}_i^2}\exp\left(-\frac{\tau_i^2+\mu^2_i\tau_1^2}{\widetilde{\sigma}_i^2}\right)I_0\left(\frac{2\mu_i\tau_1\tau_i}{\widetilde{\sigma}_i^2}\right),
\end{equation}
respectively, where $\tau_1,\dots,\tau_{N'}\geq 0$, $I_0(\cdot)$ represents the zero-order modified Bessel function of the first kind, $Q_1(\cdot,\cdot)$ is the first-order Marcum Q-function, and $\widetilde{\sigma}_i^2=\sigma^2(1-\mu_i^2)+\sigma^2_{e}|_{\rho}$.
\end{lemma}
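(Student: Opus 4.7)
The plan is to exploit joint circularly-symmetric complex Gaussianity of the LMMSE estimates $(\hat{g}_{{\rm 0}1}^{(k)},\dots,\hat{g}_{{\rm 0}N'}^{(k)})$ and reduce the derivation to standard Rayleigh/Rician transformations. Since the true channels $g_{{\rm 0}i}^{(k)}$ are jointly Gaussian by \eqref{ChannelCorrelation} and each $\hat{g}_{{\rm 0}i}^{(k)}|_\rho$ is a linear function of the observation $y_i^{(k)}$ (itself a linear combination of Gaussians), the vector of LMMSE estimates is jointly zero-mean complex Gaussian. Hence the joint pdf is fully determined by its covariance matrix, which I would compute next.

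The key structural step is to argue that given $\hat{g}_{{\rm 0}1}^{(k)}$, the remaining estimates $\{\hat{g}_{{\rm 0}i}^{(k)}\}_{i\in\mathfrak{N}\setminus\{1\}}$ are conditionally independent complex Gaussians with conditional mean $\mu_i\hat{g}_{{\rm 0}1}^{(k)}$ and conditional variance $\widetilde{\sigma}_i^2=\sigma^2(1-\mu_i^2)+\sigma_e^2|_\rho$. Conditional independence follows because each port's channel estimate is obtained on its own segment of $\Delta$ pilot symbols, so the effective observation noises (thermal noise and inter-cell interference viewed through $\mathbf{X}_0^\dagger$) are independent across $i$, and the channel innovations $\sqrt{1-\mu_i^2}\,\sigma(\alpha_i+j\beta_i)$ in \eqref{ChannelCorrelation} are independent across $i$ as well. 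The conditional mean identity $\mu_i\hat{g}_{{\rm 0}1}^{(k)}$ then follows from substituting $g_{{\rm 0}1}^{(k)}$ by its LMMSE estimate $\hat{g}_{{\rm 0}1}^{(k)}$ in the correlation decomposition of \eqref{ChannelCorrelation}; the conditional variance $\widetilde{\sigma}_i^2$ is obtained by summing the innovation variance $(1-\mu_i^2)\sigma^2$ and the LMMSE residual variance $\sigma_e^2|_\rho$ given by Proposition~\ref{Proposition1}.

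With Step 2 in hand, the joint pdf factorizes as $f_{\hat{g}_{{\rm 0}1}^{(k)}}(\cdot)\prod_{i>1}f_{\hat{g}_{{\rm 0}i}^{(k)}\mid\hat{g}_{{\rm 0}1}^{(k)}}(\cdot\mid\cdot)$, each factor being a complex Gaussian density. Converting to amplitudes: $|\hat{g}_{{\rm 0}1}^{(k)}|$ is Rayleigh with mean square $\widetilde{\sigma}_1^2=\sigma^2+\sigma_e^2|_\rho$, while $|\hat{g}_{{\rm 0}i}^{(k)}|$ given $|\hat{g}_{{\rm 0}1}^{(k)}|=\tau_1$ is Rician with non-centrality $\nu_i=\mu_i\tau_1$ and scale $\sqrt{\widetilde{\sigma}_i^2/2}$. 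Multiplying the Rayleigh marginal (with $\mu_1\triangleq 0$) by the Rician conditionals yields \eqref{PDF}. For \eqref{CDF}, I integrate the joint pdf over $[0,\tau_i]$ for each $i>1$ using the Rician cdf representation $\mathbb{P}(|\hat{g}_{{\rm 0}i}^{(k)}|\leq\tau_i\mid|\hat{g}_{{\rm 0}1}^{(k)}|=u)=1-Q_1(\mu_i u\sqrt{2/\widetilde{\sigma}_i^2},\tau_i\sqrt{2/\widetilde{\sigma}_i^2})$, then substitute $t=u^2/\widetilde{\sigma}_1^2$ so that the Rayleigh density $(2u/\widetilde{\sigma}_1^2)\exp(-u^2/\widetilde{\sigma}_1^2)\,du$ collapses to $e^{-t}\,dt$ over $t\in[0,\tau_1^2/\widetilde{\sigma}_1^2]$, producing \eqref{CDF} after matching $\mu_i^2 u^2/\widetilde{\sigma}_i^2=\mu_i^2\widetilde{\sigma}_1^2 t/\widetilde{\sigma}_i^2$ inside the first argument of $Q_1$.

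The main obstacle is Step 2, and specifically the identification of the conditional mean and variance. The LMMSE estimator introduces a scaling factor $A/(A+B)$ on each observation (cf.\ \eqref{VarianceProof}), so a naive computation of $\mathrm{Cov}(\hat{g}_{{\rm 0}1}^{(k)},\hat{g}_{{\rm 0}i}^{(k)})/\mathrm{Var}(\hat{g}_{{\rm 0}1}^{(k)})$ does not immediately produce the clean coefficient $\mu_i$. I expect the argument to proceed by (i) exploiting the fact that the LMMSE residuals at distinct channel-estimation slots are jointly independent, and (ii) re-expressing $\hat{g}_{{\rm 0}i}^{(k)}$ directly through the correlation decomposition $g_{{\rm 0}i}^{(k)}=\mu_i g_{{\rm 0}1}^{(k)}+\sqrt{1-\mu_i^2}\,\sigma(\alpha_i+j\beta_i)$ of \eqref{ChannelCorrelation} and then replacing the unknown true channel by its estimate in the conditioning step; this is where the decomposition $\widetilde{\sigma}_i^2=\sigma^2(1-\mu_i^2)+\sigma_e^2|_\rho$ emerges naturally from adding the innovation variance and the residual estimation variance.
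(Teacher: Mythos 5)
Your proposal follows essentially the same route as the paper's Appendix B: condition on the reference-port estimate, treat the remaining amplitudes as independent Rician variables with non-centrality $\mu_i\tau_1$ and scale $\widetilde{\sigma}_i^2$, multiply by the Rayleigh marginal of $\big|\widehat{g}_{{\rm 0}1}^{(k)}\big|$ to get \eqref{PDF}, and obtain \eqref{CDF} by integrating each Rician factor into a $1-Q_1(\cdot,\cdot)$ term followed by the substitution $t=t_1^2/\widetilde{\sigma}_1^2$. The obstacle you flag in Step 2 --- that the LMMSE scaling prevents the conditional-mean coefficient from being exactly $\mu_i$ --- is a genuine subtlety, but the paper does not resolve it either; it simply asserts the decomposition $\widehat{g}_{{\rm 0}i}^{(k)}|_{\rho}=g_{{\rm 0}i}^{(k)}+e_i|_{\rho}$ with variance $\widetilde{\sigma}_i^2=\sigma^2(1-\mu_i^2)+\sigma^2_{e}|_{\rho}$ and transfers the correlation structure of \eqref{ChannelCorrelation} to the estimates by fiat, so your argument is at the same level of rigor as the published one.
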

\begin{proof}
	See Appendix \ref{Appendix2}.
\end{proof}
The performance of a FA-based UE in large-scale multi-cell networks is mainly compromised by the existence of multi-user interference \cite{SGbook}. In the considered network deployment, the multi-user interference observed by the $i$-th port of the typical UE's $k$-th FA, where the serving BS is located at $x_0\in\Phi$, is given by \eqref{Interference}.  Although the performance of a communication network by considering the actual multi-user interference can be easily evaluated for the PPP case with independent fading channels, in most relevant (realistic) models, it is either impossible to analytically analyze or cumbersome to evaluate even numerically. Motivated by the aforementioned discussion, the following proposition states our assumption of approximate the multi-user interference distribution of large-scale wireless networks by using Gamma distribution, aiming to provide simple and tractable expressions for the outage performance.
\begin{proposition}\label{InterferenceDistribution}
The multi-user interference observed at a port of the typical UE, conditioned on the distance from the serving BS, follows a Gamma distribution with pdf
\begin{equation}\label{Gammapdf}
f_{\mathcal{I}}(\gamma|r)=\frac{\gamma^{\varpi-1}\exp\left(-\frac{\gamma}{\varrho}\right)}{\Gamma[\varpi]\varrho^\varpi},\ \gamma>0,
\end{equation}
with shape parameter $\varpi=2\left(\pi\lambda_b\frac{r^{2-a}}{a-2}\right)^2$ and scale parameter $\varrho=\frac{\sigma^2(a-2)}{\pi\lambda_br^{2-a}}$.
\end{proposition}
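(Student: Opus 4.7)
The plan is to approximate the PPP shot-noise interference $\mathcal{I}$, conditioned on the typical link distance $r$, by a Gamma distribution whose parameters are fixed through moment matching. Since $\mathcal{I}$ is a non-negative functional of the PPP with Rayleigh-fading marks, its exact Laplace transform is accessible through the probability generating functional, but the corresponding density has no closed form. The two-parameter Gamma family is a natural surrogate because it is supported on $(0,\infty)$, is determined by its first two moments, and its pdf and MGF integrate cleanly with the SG expressions that appear later in Section \ref{PerformanceAnalysis}.

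First I would observe that the conditional first moment has essentially already been derived in equation \eqref{MeanInterference} via Campbell's theorem applied to the reduced Palm PPP on $\{t:t\geq r\}$ (the lower limit $r$ reflecting that the tagged BS is the nearest point), yielding $\mathbb{E}[\mathcal{I}\mid r]=2\pi\lambda_b\sigma^2 r^{2-a}/(a-2)$. Next I would compute the conditional variance by using the second-order Campbell formula for the PPP together with the independence of the marks:
\begin{equation*}
\mathrm{Var}(\mathcal{I}\mid r)\;=\;2\pi\lambda_b\,\mathbb{E}\!\left[\bigl|g_{ji}^{(k)}\bigr|^{4}\right]\int_{r}^{\infty} t^{\,1-2a}\,{\rm d}t,
\end{equation*}
plugging in $\mathbb{E}[|g|^{4}]=2\sigma^{4}$ (the fourth moment of a zero-mean complex Gaussian with variance $\sigma^{2}$), which converges for every $a>2$ and yields a clean closed form.

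With the first two conditional moments in hand, I would match them against the Gamma$(\varpi,\varrho)$ identities $\varpi\varrho=\mathbb{E}[\mathcal{I}\mid r]$ and $\varpi\varrho^{2}=\mathrm{Var}(\mathcal{I}\mid r)$, solve the $2\times 2$ system, and simplify to recover the announced expressions for the shape and scale parameters; substituting into the standard Gamma density produces \eqref{Gammapdf}. The reference to Gamma in the name $f_{\mathcal{I}}(\gamma\mid r)$ then follows by definition.

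The main obstacle will not be the algebraic manipulation but the legitimacy of the surrogate itself: matching two moments does not imply convergence in distribution, and the true PPP shot noise is known to exhibit heavier tails than any Gamma law when $a$ is close to $2$. A careful justification is therefore needed, either by citing earlier SG works in which this Gamma surrogate has been benchmarked against the exact Laplace transform of $\mathcal{I}$, or by giving a direct bound on the Kolmogorov or Wasserstein distance between the two laws. Without such a step, the outage-probability expressions that subsequently rely on \eqref{Gammapdf} would inherit an uncontrolled approximation error, so I would treat this qualitative validation as the pivotal component of the proof.
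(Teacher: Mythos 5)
Your overall strategy is exactly the paper's: its Appendix C also fixes the Gamma surrogate by matching the first moment $\mathbb{E}[\mathcal{I}\mid r]$ from \eqref{MeanInterference} together with a second-order quantity, and then sets $\varpi=(\mathbb{E}[\mathcal{I}])^2/\mathrm{Var}(\mathcal{I})$ and $\varrho=\mathrm{Var}(\mathcal{I})/\mathbb{E}[\mathcal{I}]$. The divergence — and it is a consequential one — is in the second-moment step. You compute the true conditional variance of the shot noise via the second-order Campbell formula, $\mathrm{Var}(\mathcal{I}\mid r)=2\pi\lambda_b\,\mathbb{E}[|g|^4]\int_r^\infty t^{1-2a}\,{\rm d}t=2\pi\lambda_b\sigma^4 r^{2-2a}/(a-1)$, which is the standard expression and converges for $a>2$. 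The paper instead evaluates an integral that collapses to the fourth moment of the fading amplitude alone, $\mathrm{Var}(\mathcal{I})=2\sigma^4$, with no spatial integral at all. Only that value reproduces the stated parameters: from $\varpi\varrho=2\pi\lambda_b\sigma^2 r^{2-a}/(a-2)$ and $\varpi\varrho^2=2\sigma^4$ one recovers precisely $\varpi=2\left(\pi\lambda_b r^{2-a}/(a-2)\right)^2$ and $\varrho=\sigma^2(a-2)/(\pi\lambda_b r^{2-a})$. With your variance the same matching yields $\varpi=2\pi\lambda_b(a-1)r^2/(a-2)^2$ and $\varrho=\sigma^2(a-2)r^{-a}/(a-1)$, i.e., a different Gamma law. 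So, as written, your derivation will not land on the proposition's $\varpi$ and $\varrho$; to match the statement you would have to adopt the paper's value $2\sigma^4$, at which point you should ask whether that quantity is actually $\mathrm{Var}(\mathcal{I}\mid r)$ for the sum in \eqref{Interference} — by your own (correct) calculation it is not.

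Two smaller remarks. First, your insistence on justifying the Gamma surrogate (moment matching does not control the distributional error, and the shot noise is heavy-tailed for $a$ near $2$) is legitimate, but the paper does not supply such a justification either — it simply asserts the approximation is ``very tight'' — so this is a criticism of the proposition rather than a gap relative to the paper's proof. Second, the correct Gamma identities are $\mathbb{E}[X]=k\theta$ and $\mathrm{Var}(X)=k\theta^2$; you state the matching in terms of the variance, which is the right way to do it, whereas the appendix writes $\mathbb{E}[X^2]=k\theta^2$ and then equates a second moment with a variance.
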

\begin{proof}
	See Appendix \ref{Appendix3}.
\end{proof}
\subsection{Outage performance}\label{Coverage}
By using the proposed PS technique, the serving BS selects the port from the set of ``candidate'' ports, $\mathcal{C}$, that provides the maximum SINR i.e. ${\rm SINR}^*=\max\{{\rm SINR}_{\mathtt{i}_k}\}$, where $k\in\mathcal{M}$. Hence, the downlink outage performance $\mathcal{P}_o(R)$ of a network can be described as the probability that the maximum mutual information of the channel between the typical UE and its serving BS is smaller than a target rate $R$ (data bits/channel use). This performance can be mathematically described with the probability $\mathbb{P}\left[\left(1-\frac{L_t}{L_c}\right)\log\left(1+{\rm SINR}^*\right)<R\right]$, where $\left(1-\frac{L_t}{L_c}\right)$ represents the fractional amount of time (relative to the total frame length) used for data transmission. Therefore, with uncorrelated branches, the conditional cdf of ${\rm SINR}^*$ is given by
\begin{align}
	\mathcal{P}_o(R)&=\mathbb{P}\left[{\rm SINR}^*<2^{\frac{R}{1-\frac{L_t}{L_c}}}-1\right]\nonumber\\&=\prod_{k=1}^M\mathbb{E}_{\rho,\mathcal{I}_{\mathtt{i}_k}}\left[\mathbb{P}[{\rm SINR}_{\mathtt{i}_k}<\vartheta|\rho,\mathcal{I}_{\mathtt{i}_k}]\right]\label{ProofOutage1},
\end{align}
where $\mathbb{P}[{\rm SINR}_{\mathtt{i}_k}<\vartheta|\rho,\mathcal{I}_{\mathtt{i}_k}]$ denotes the outage probability of the typical UE communicating with the serving BS via the pre-selected port at its $k$-th FA. By substituting \eqref{SINR} into \eqref{ProofOutage1}, the conditional cdf of ${\rm SINR}^*$ can be calculated as
\begin{align}
	\mathcal{P}_o(R)&=\prod_{k=1}^M\mathbb{E}_{\rho,\mathcal{I}_{\mathtt{i}_k}}\left[\mathbb{P}\left[\left|\widehat{g}_{{\rm 0}\mathtt{i}_k}^{(k)}\right|<\sqrt{\vartheta r_{\mathtt{i}_k}^a(\rho)\left(\mathcal{I}_{\mathtt{i}_k}+\frac{\sigma^2_e|_{\rho}}{r^{a}_{\mathtt{i}_k}(\rho)}+\frac{1}{\epsilon}\right)}\middle|\rho,\mathcal{I}_{\mathtt{i}_k}\right]\right]\nonumber\\
	&=\prod_{k=1}^M\mathbb{E}_{\rho,\mathcal{I}_{\mathtt{i}_k}}\left[\mathbb{P}\left[\left|\widehat{g}_{{\rm 0}\mathtt{i}_k}^{(k)}\right|<\sqrt{\Theta_{\mathtt{i}_k}}\middle|\rho,\mathcal{I}_{\mathtt{i}_k}\right]\right]\label{Proof},
\end{align}
where $\vartheta=2^{\frac{R}{1-\frac{L_t}{L_c}}}-1$ and $\Theta_i = \vartheta r_{i}^a(\rho)\left(\mathcal{I}_{i}+\frac{\sigma^2_e|_{\rho}}{r^{a}_{i}(\rho)}+\frac{1}{\epsilon}\right)$. In the following lemma, an analytical expression for the conditional outage performance i.e., $\mathcal{P}_o(R|\rho,\mathcal{I}_{\mathtt{i}_k})=\mathbb{P}\left[\left|\widehat{g}_{{\rm 0}\mathtt{i}_k}^{(k)}\right|<\sqrt{\Theta_{\mathtt{i}_k}}|\rho,\mathcal{I}_{\mathtt{i}_k}\right]$ with $k\in\mathcal{M}$, of the considered system model is derived.

\begin{lemma}\label{Lemma2}
	The conditional outage probability when utilizing the proposed SeCE  technique in cellular networks with FA-based UEs, is given by
	\begin{align}\label{Conditional}
\mathcal{P}_o(R|\rho,\mathcal{I}_{\mathtt{i}_k}) = \int_0^{\frac{\Theta_1^2}{\widetilde{\sigma}_1^2}}\exp\left(-t\right)\prod_{j\in\mathfrak{N}}\left[1-Q_1\left(\sqrt{2\mu_j^2\frac{\widetilde{\sigma}_1^2}{\widetilde{\sigma}_j^2}t},\sqrt{\frac{2}{\widetilde{\sigma}_j^2}}\Theta_j\right)\right]{\rm d}t,
	\end{align}
where $\Theta_j = \vartheta r_j^a(\rho)\left(\mathcal{I}_j+\frac{\sigma^2_e|_\rho}{r^{a}_j(\rho)}+\frac{1}{\nu}\right)$; $r_j(\rho)$ and $\sigma^2_{e}|_{\rho}$ depict the length of the typical link and the channel estimation, that are given by \eqref{Distance} and \eqref{NoiseVarianceEq}, respectively, and $\widetilde{\sigma}_j^2=\sigma^2(1-\mu_j^2)+\sigma^2_{e}|_{\rho}$.
\end{lemma}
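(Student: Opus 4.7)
The plan is to reduce the conditional outage probability to an instance of the joint CDF from Lemma~\ref{Lemma1}. The starting point is equation~(\ref{Proof}), which already isolates $\mathcal{P}_o(R|\rho,\mathcal{I}_{\mathtt{i}_k})=\mathbb{P}[\,|\widehat{g}_{0\mathtt{i}_k}^{(k)}|<\sqrt{\Theta_{\mathtt{i}_k}}\,|\,\rho,\mathcal{I}_{\mathtt{i}_k}]$, so the task is to rewrite this single‐port event as a joint event over all ``selected'' ports in $\mathfrak{N}$, and then invoke Lemma~\ref{Lemma1} with $\tau_j\leftarrow\sqrt{\Theta_j}$.

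First I would apply the PS rule (\ref{Association}) to identify the selected index $\mathtt{i}_k=\argmax_{j\in\mathfrak{N}}|\widehat{g}_{0j}^{(k)}|$, so that $|\widehat{g}_{0\mathtt{i}_k}^{(k)}|=\max_{j\in\mathfrak{N}}|\widehat{g}_{0j}^{(k)}|$. This turns the outage event into ``the largest estimated magnitude in the FA fails its SINR target''. Next I would argue that, because the physical extent of a single FA is at most $\kappa\lambda$ while the link distance $\rho$ is typically much larger, one has $r_j(\rho)\approx\rho$ for every $j\in\mathfrak{N}$, and the conditioned multi‐user interference $\mathcal{I}_j$ shares the same distribution across ports inside the FA (the path loss $\|x_j\|^{-a}$ is port-independent, and the fading terms $g_{ji}^{(k)}$ are i.i.d.\ across $i$). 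Under this identification the thresholds $\{\Theta_j\}$ may be treated as common inside one FA, so that
\[
\{|\widehat{g}_{0\mathtt{i}_k}^{(k)}|<\sqrt{\Theta_{\mathtt{i}_k}}\}\;=\;\bigcap_{j\in\mathfrak{N}}\{|\widehat{g}_{0j}^{(k)}|<\sqrt{\Theta_j}\}.
\]
The probability of the right‐hand event is, by definition, the joint CDF of the estimated magnitudes evaluated at $(\sqrt{\Theta_1},\dots,\sqrt{\Theta_{N'}})$.

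With this reduction in place, the final step is a direct substitution: plugging $\tau_j=\sqrt{\Theta_j}$ into expression (\ref{CDF}) of Lemma~\ref{Lemma1} reproduces the integral in (\ref{Conditional}), because the upper limit $\tau_1^2/\widetilde{\sigma}_1^2$ and the Marcum‐$Q$ arguments $\sqrt{2/\widetilde{\sigma}_j^2}\,\tau_j$ become the stated quantities in terms of $\Theta_1,\dots,\Theta_{N'}$. No further integration or fading average is required here; the expectations over $\rho$ and $\mathcal{I}_{\mathtt{i}_k}$ are still outside the probability and will be handled in the subsequent unconditional result.

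The only non‐routine step is the second one, i.e., the passage from ``max‐below‐its‐own‐threshold'' to ``intersection of port-wise thresholds''. If the $\Theta_j$ were arbitrary and distinct, these two events would in general differ; the identification is clean only under the small‐FA approximation $r_j(\rho)\approx\rho$ and the port-independence of $\mathcal{I}_j$ within one FA. Hence the main obstacle is to state and justify this approximation carefully (or, alternatively, to upper/lower bound the exact probability by the joint CDF at suitable uniform thresholds and argue that the gap vanishes when $\kappa\lambda\ll\rho$). Once this equivalence is accepted, the remainder of the proof is a one‐line invocation of Lemma~\ref{Lemma1}.
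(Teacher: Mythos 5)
Your proposal follows essentially the same route as the paper: rewrite the outage event at the pre-selected ($\argmax$) port as the joint event $\bigcap_{j\in\mathfrak{N}}\{|\widehat{g}_{{\rm 0}j}^{(k)}|<\sqrt{\Theta_j}\}$ and then substitute $\tau_j=\sqrt{\Theta_j}$ into the joint cdf of Lemma~\ref{Lemma1}. If anything you are more careful than the paper, which states the max-to-intersection identity with port-dependent thresholds $\Theta_j$ without comment, whereas you correctly observe that it is exact only for a common threshold and justify the port-dependent version via the small-FA approximation $r_j(\rho)\approx\rho$ and the port-independence of the conditioned interference.
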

\begin{proof}
	The conditional outage probability expression can be calculated as
	\begin{align*}
		\mathcal{P}_o(R|\rho,\mathcal{I}_{\mathtt{i}_k})&=\mathbb{P}\left[\left|\widehat{g}_{{\rm 0}\mathtt{i}_k}^{(k)}\right|<\sqrt{\Theta_{\mathtt{i}_k}}\middle|\rho,\mathcal{I}_{\mathtt{i}_k}\right]\nonumber\\
		&=\mathbb{P}\left[\left|\widehat{g}_{{\rm 0}1}^{(k)}\right|<\sqrt{\Theta_1},\dots,\left|\widehat{g}_{{\rm 0}N'}^{(k)}\right|<\sqrt{\Theta_{N'}}|\rho,\mathcal{I}_1,\dots,\mathcal{I}_{N'}\right].
	\end{align*}
	Thus, by substituting $\tau_1=\sqrt{\Theta_1}$,$\dots$, $\tau_N=\sqrt{\Theta_N}$ into the joint cdf that is given in Lemma \ref{Lemma1}, which completes the proof.
	%Based on the adopted association scheme, that is given by \eqref{Association}, and the distribution of the multi-user interference experienced by the $i$-th port of the typical UE (see Proposition \ref{InterferenceDistribution}), the outage performance conditioned on $\{r_i\}$, where $i\in\{1,N\}$, can be evaluated as
	%{\small \begin{align}
	%	&\mathcal{P}_c(\vartheta|\{r_i\})\!=\!\prod_{i=1}^N\int_{\gamma_i=0}^\infty\nonumber\mathbb{P}\!\left[|\hat{g}_1|\!<\!\sqrt{\vartheta r_\mathtt{i}^a\!\left(\mathcal{I}_\mathtt{i}\!+\!\frac{\sigma^2_e|_{r_\mathtt{i}}}{r^{a}_\mathtt{i}}\!+\!\frac{1}{\nu}\right)},\dots,|\hat{g}_N|\!<\!\sqrt{\vartheta r_\mathtt{i}^a\!\left(\mathcal{I}_\mathtt{i}\!+\!\frac{\sigma^2_e|_{r_\mathtt{i}}}{r^{a}_\mathtt{i}}\!+\!\frac{1}{\nu}\right)}|\{r_i\},\{\gamma_i\}\right]\\&\qquad\qquad\qquad\qquad\qquad\qquad\qquad\qquad\qquad\qquad\qquad\qquad\qquad\qquad\qquad\qquad\qquad\qquad\times f_{\mathcal{I}}(\gamma_i;k,\theta){\rm d}\gamma_i,
	%\end{align}}where the probability inside the integral can be evaluated as \eqref{CDF}. By unconditioning on $\{r_i\}$ with the aid of \eqref{ContactPDF} and by letting $\delta_{j} = \frac{\vartheta r_j^a}{\widetilde{\sigma}_j^2}\left(\gamma_j+\frac{\sigma^2_e|_{r_j}}{r^{a}_j}+\frac{1}{\nu}\right)$, the final expression for the achieved outage probability can be derived.
\end{proof}
In spite of the fact that Lemma \ref{Lemma2} provides an analytical approach to obtain the conditional outage probability when utilizing the proposed SeCE technique, the analysis of the achieved performance is still cumbersome and tedious, impeding the extraction of meaningful insights. To this end, we evaluate the achieved performance in the asymptotic regime. More specifically, by considering the interference-limited scenario (i.e., $\epsilon\rightarrow\infty$) for the special case where $\mu_i=\mu\ \forall i\in\mathcal{N}$ in the following lemma, an upper and a lower bound for the conditional outage probability can be derived.
\begin{lemma}\label{Lemma3}
	The conditional outage probability when utilizing the proposed SeCE technique in cellular networks with FA-based UEs, is upper bounded by
	\begin{equation}
	\mathcal{P}^{ u}_o(R|\rho,\mathcal{I}_{\mathtt{i}_k})\! =\!  1\!-\exp\left(\!-\Xi_1\right)\!-\!\Upsilon^u(\Xi_1)\!\sum_{j\in\mathfrak{N}}\exp\left(-\Xi_j\right)\!-2\frac{\pi\mu^2}{(1+\mu^2)^{3/2}}\!\sum_{j\in\mathfrak{N}}\!\sqrt{\Xi_j}\exp\!\left(\!-\Xi_j\frac{2+\mu^2}{1+\mu^2}\right)
	\end{equation}
and lower bounded by
\begin{equation}
\mathcal{P}^{l}_o(R|\rho,\mathcal{I}_{\mathtt{i}_k}) =  1\!-\!\exp\left(\!-\Xi_1\right)\!-\!\Upsilon^l(\Xi_1)\sum_{j\in\mathfrak{N}}\exp\left(-\Xi_j\right)
\end{equation}
	where $\widetilde{\sigma}^2=\sigma^2(1-\mu^2)+2\pi\lambda_b\frac{N}{L_T}\frac{r(\rho)^{2}}{a-2}$, $\Xi_i = \frac{\Theta_i^2}{\widetilde{\sigma}^2}$ for $i\in\mathfrak{N}$, $\Upsilon^u(x) = \frac{1-\exp\left(-x\left(1-\mu\right)^2\right)}{1+\mu^2}$ and $\Upsilon^l(x) = \frac{1-\exp\left(-x\left(1+\mu\right)^2\right)}{1+\mu^2}$.
\end{lemma}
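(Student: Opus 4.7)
The plan is to start from Lemma \ref{Lemma2} and specialize its exact expression under the two standing assumptions, then replace the non-elementary Marcum $Q_1$ factors by known elementary bounds, and finally integrate term by term.

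Under the assumption $\mu_j=\mu$ for every $j\in\mathcal{N}$, and in the interference-limited regime ($\epsilon\to\infty$ in Proposition \ref{Proposition1}, with the mild replacement $r_j(\rho)\to r(\rho)$ in the error-variance so that it is independent of the port index), every $\widetilde{\sigma}_j^2$ in \eqref{Conditional} collapses to the common value $\widetilde{\sigma}^2$ announced in the lemma. The Marcum $Q_1$ arguments reduce to $\sqrt{2\mu^2 t}$ and $\sqrt{2\Xi_j}$, so \eqref{Conditional} becomes $\int_0^{\Xi_1}e^{-t}\prod_{j\in\mathfrak{N}}\bigl[1-Q_1(\sqrt{2\mu^2 t},\sqrt{2\Xi_j})\bigr]\,dt$. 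Because $\mu_1\triangleq 0$ forces $Q_1(0,\sqrt{2\Xi_1})=e^{-\Xi_1}$, the $j=1$ factor of the product equals $1-e^{-\Xi_1}$ and is constant in $t$; this contributes the leading $1-e^{-\Xi_1}$ term seen in both $\mathcal{P}^u_o$ and $\mathcal{P}^l_o$.

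The two sides of the bound come from sandwiching the remaining Marcum factors by elementary exponentials via the standard inequalities $e^{-(\sqrt{\Xi_j}+\mu\sqrt{t})^2}\leq Q_1(\sqrt{2\mu^2 t},\sqrt{2\Xi_j})\leq e^{-(\sqrt{\Xi_j}-\mu\sqrt{t})^2}$, valid because $\sqrt{2\Xi_j}\geq\sqrt{2\mu^2 t}$ on the integration range: this follows from $r_j(\rho)\geq r_1(\rho)=\rho$ (see \eqref{Distance}) so $\Xi_j\geq\Xi_1\geq \mu^2 t$. Combining with the Weierstrass-type inequality $\prod_j(1-y_j)\geq 1-\sum_j y_j$ (applied in the appropriate direction on each side), the product over $\mathfrak{N}$ is reduced to a single linear sum, and each term produces an integral of the form $\int_0^{\Xi_1}e^{-t}e^{-(\sqrt{\Xi_j}\pm\mu\sqrt{t})^2}\,dt$. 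The substitution $u=\sqrt{t}$ followed by completing the square $-(1+\mu^2)(u\pm c)^2+\mu^2\Xi_j/(1+\mu^2)$ with $c=\mu\sqrt{\Xi_j}/(1+\mu^2)$ reduces each such integral to an elementary $2u\,du$ piece and a residual Gaussian $du$ piece. The elementary piece evaluates to $\frac{e^{-\Xi_j}}{1+\mu^2}\bigl[1-e^{-\Xi_1(1\mp\mu)^2}\bigr]$ once the cross term $2\mu\sqrt{\Xi_1\Xi_j}$ is absorbed by the monotonicity $\Xi_j\geq\Xi_1$, and summing over $j$ produces exactly the $\Upsilon^{u,l}(\Xi_1)\sum_j e^{-\Xi_j}$ structure, with the $\pm$ signs dictating which of $(1-\mu)^2$ or $(1+\mu)^2$ appears inside $\Upsilon^{u}$ or $\Upsilon^l$. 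The residual Gaussian piece is an error-function contribution that carries a favourable sign and can simply be dropped in the lower-bound direction, but in the upper-bound direction must be controlled by an explicit Gaussian tail bound; the exponent $(2+\mu^2)/(1+\mu^2)=1+1/(1+\mu^2)$ arises as the sum of $\Xi_j$ (from the prefactor $e^{-\Xi_j}$ in the integral) and $\Xi_j/(1+\mu^2)$ (the decay of the Gaussian tail evaluated at the peak argument), and the constant $2\pi\mu^2/(1+\mu^2)^{3/2}$ comes from the residual $\sqrt{\pi}/\sqrt{1+\mu^2}$ Gaussian-normalization factor multiplied by the shift $c$.

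The main obstacle is the last, integral-evaluation step: one must carefully track the $\pm$ signs through the completion of the square so that the $(1\mp\mu)^2$ exponents line up consistently with $\Upsilon^{u,l}$, choose the right product inequality on each side, and pick a Gaussian upper bound on the erf tail whose leading constant reproduces exactly $2\pi\mu^2/(1+\mu^2)^{3/2}$. Everything else -- pulling out the $j=1$ factor, invoking the Marcum inequalities, and applying Weierstrass -- is routine bookkeeping once the specialization and monotonicity $\Xi_j\geq\Xi_1$ have been set up.
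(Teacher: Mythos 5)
Your route is essentially the paper's: specialize to $\mu_j=\mu$ and equidistant ports so that all $\widetilde{\sigma}_j^2$ collapse to $\widetilde{\sigma}^2$, linearize the product over $\mathfrak{N}$ into $1-\sum_j Q_1(\cdot,\cdot)$, sandwich each Marcum factor with $\exp\left(-\tfrac{(\beta+\alpha)^2}{2}\right)\leq Q_1(\alpha,\beta)\leq\exp\left(-\tfrac{(\beta-\alpha)^2}{2}\right)$, evaluate $\int_0^{\Xi_1}e^{-t}e^{-(\sqrt{\Xi_j}\mp\mu\sqrt{t})^2}\,{\rm d}t$ by completing the square in $u=\sqrt{t}$, and then dispose of the residual error-function piece (dropped for the lower bound, bounded for the upper bound, which is exactly the paper's $\widetilde{\Delta}\approxeq 0$ and $\Delta\approxeq 2$ step). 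The handling of the cross term differs cosmetically — you invoke monotonicity $\Xi_j\geq\Xi_1$ where the paper assumes $\Xi_j/\Xi_1\rightarrow 1$ — but the skeleton is the same.

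Two soft spots are worth flagging. First, your claim that the $j=1$ factor of the product ``contributes the leading $1-e^{-\Xi_1}$ term'' is not right as stated: a factor that is constant in $t$ pulls out of the product \emph{multiplicatively}, not additively, so it cannot by itself produce the additive $1-\exp(-\Xi_1)$. In the paper that term arises from $\int_0^{\Xi_1}e^{-t}\,{\rm d}t$ applied to the constant $1$ \emph{after} the product has been linearized to $1-\sum_j Q_1$; your derivation still lands on the right structure only because the linearization you perform next happens to recover it. Second, ``Weierstrass applied in the appropriate direction on each side'' conceals a genuine gap: $\prod_j(1-y_j)\geq 1-\sum_j y_j$ is a one-sided inequality, and there is no matching elementary reverse inequality that turns the product into $1-\sum_j y_j$ from the other side. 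One of your two claimed bounds therefore does not follow from this step as an actual inequality. To be fair, the paper has the identical weakness — it writes the linearization as ``$\approx$'' rather than as an inequality — so the result in both cases is an approximation dressed as a bound rather than a rigorous sandwich; but since you present it as a provable two-sided bound, the missing reverse product inequality is the one step in your outline that would actually fail if pressed.
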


\begin{proof}
	See Appendix \ref{Appendix4}.
\end{proof}
In the following Theorem, we evaluate the outage performance of the considered system model, by un-conditioning the expression in Lemma \ref{Lemma2} with both the interference distribution and the pdf of the distance from the typical UE to its serving BS.
\begin{theorem}
The outage probability when utilizing the proposed SeCE and PS techniques in cellular networks with FA-based UEs, is given by
\begin{equation*}
\mathcal{P}_o(R)= \prod_{k=1}^M\prod_{i\in\mathfrak{N}}\int_{0}^\infty\left(\int_{0}^\infty \mathcal{P}_o(R|\rho,\gamma_i)f_\mathcal{I}(\gamma_i|\rho){\rm d}\gamma_i\right)  2\pi\lambda_b \rho\exp\left(-\pi\lambda_b  \rho^2\right){\rm d}\rho,
\end{equation*}
where $\mathcal{P}_o(R|\rho,\gamma_i)$ represents the achieved outage probability conditioned on the distance of the typical UE from its serving BS $\rho$ \eqref{Distance} and the observed multi-user interference $\gamma_i$, that is given in Lemma \ref{Lemma2}, $\varpi$ and $\varrho$ depict the shape and scale parameter of the multi-user interference distribution, respectively, where $\varpi=2\left(\pi\lambda_b\frac{r_i(\rho)^{2-a}}{a-2}\right)^2$ and $\varrho=\frac{\sigma^2(a-2)}{\pi\lambda_br_i(\rho)^{2-a}}$.
\end{theorem}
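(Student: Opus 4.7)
The plan is to assemble the claimed expression by combining three ingredients that are already available earlier in the paper: the factorization over the $M$ FAs from equation (\ref{ProofOutage1}), the joint conditional cdf of Lemma \ref{Lemma2}, and the de-conditioning integrals with respect to the multi-user interference and the serving-BS distance.

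First, I would start from equation (\ref{ProofOutage1}),
\begin{equation*}
\mathcal{P}_o(R)=\prod_{k=1}^{M}\mathbb{E}_{\rho,\mathcal{I}_{\mathtt{i}_k}}\!\left[\mathbb{P}\bigl[\mathrm{SINR}_{\mathtt{i}_k}<\vartheta\,\big|\,\rho,\mathcal{I}_{\mathtt{i}_k}\bigr]\right],
\end{equation*}
and justify the outer product over $k$: since the proposed circular array enforces a $\lambda/2$ spacing between adjacent FAs, the fading vectors $(g_{0i}^{(k)})_{i\in\mathcal{N}}$ and $(g_{0i}^{(k')})_{i\in\mathcal{N}}$ for $k\neq k'$ are statistically independent conditioned on $\rho$, and so are the resulting LMMSE-estimated channels, port selections and per-FA SINRs. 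The distances $r_i(\rho)$ do not depend on $k$, which also confirms the identical per-FA structure of the factors.

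Second, I would plug in Lemma \ref{Lemma2} for each factor: by Lemma \ref{Lemma2}, the inner conditional outage probability $\mathbb{P}[\mathrm{SINR}_{\mathtt{i}_k}<\vartheta\mid\rho,\mathcal{I}_{\mathtt{i}_k}]$ equals $\mathcal{P}_o(R\mid\rho,\mathcal{I}_{\mathtt{i}_k})$, which is itself the joint cdf of $\bigl|\widehat{g}_{01}^{(k)}\bigr|,\dots,\bigl|\widehat{g}_{0N'}^{(k)}\bigr|$ from Lemma \ref{Lemma1} evaluated at the thresholds $\sqrt{\Theta_1},\dots,\sqrt{\Theta_{N'}}$. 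This replaces the inner probability by an explicit one-dimensional integral depending on the port-wise interference levels $\mathcal{I}_i$ only through the $\Theta_i$.

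Third, I would take the remaining expectations in the standard order dictated by the conditioning hierarchy. For fixed $\rho$, I would substitute the Gamma pdf $f_{\mathcal{I}}(\gamma_i\mid\rho)$ from Proposition \ref{InterferenceDistribution} to de-condition each port-wise interference, and then integrate $\rho$ against the contact-distance pdf $f_R(\rho)=2\pi\lambda_b\rho\exp(-\pi\lambda_b\rho^2)$ from equation (\ref{ContactPDF}). This produces precisely the two nested integrals $\int_0^\infty\bigl(\int_0^\infty \mathcal{P}_o(R\mid\rho,\gamma_i)\,f_{\mathcal{I}}(\gamma_i\mid\rho)\,\mathrm{d}\gamma_i\bigr)2\pi\lambda_b\rho e^{-\pi\lambda_b\rho^2}\mathrm{d}\rho$ that appear in the theorem, with $\varpi$ and $\varrho$ inherited from Proposition \ref{InterferenceDistribution} evaluated at the port distance $r_i(\rho)$.

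The main obstacle, and the step I would flag most explicitly, is the additional product $\prod_{i\in\mathfrak{N}}$ appearing inside the $\rho$-integral. Strictly, the interferences $\{\mathcal{I}_i\}_{i\in\mathfrak{N}}$ seen at different selected ports of the same FA are driven by a common PPP of interfering BSs, and hence are correlated through both the shared point pattern and the common path-loss geometry. The theorem statement implicitly treats them as conditionally independent given $\rho$, each marginally Gamma in the sense of Proposition \ref{InterferenceDistribution}. I would make this approximation explicit and argue that it is consistent with the other tractability approximations used in the paper: the small-scale fading coefficients $g_{ji}^{(k)}$ are independent across $i$, and the port distances $r_i(\rho)$ differ by at most $O(\kappa\lambda)$, so the per-port Gamma marginals essentially share the same shape/scale parameters and the joint dependence has negligible effect on the computed outage. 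Under this approximation the expectation factorizes over $i\in\mathfrak{N}$, which closes the derivation and yields the stated formula.
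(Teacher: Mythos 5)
Your proposal is correct and follows essentially the same route as the paper's own (very terse) proof: substitute Lemma \ref{Lemma2} into \eqref{Proof}, average out the interference with the Gamma approximation of Proposition \ref{InterferenceDistribution}, and then remove the conditioning on $\rho$ via \eqref{ContactPDF}. Your explicit flagging of the conditional-independence approximation for the per-port interferences $\{\mathcal{I}_i\}_{i\in\mathfrak{N}}$ (which justifies the product over $i\in\mathfrak{N}$) is a point the paper leaves implicit, and is a worthwhile addition rather than a deviation.
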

\begin{proof}
	The expression for the outage probability can be obtained by firstly substituting the expression derived in Lemma \ref{Lemma2} into \eqref{Proof}. Then, by conditioning the derived expression on $\rho$ and averaging out the aggregate multi-user interference using Proposition \ref{InterferenceDistribution}. Lastly, the final expression can be derived by getting rid of the condition on $\rho$ by utilizing \eqref{ContactPDF}.
\end{proof}

\section{Numerical Results}
\begin{table*}[t]\centering
	\caption{Simulation Parameters}\label{Table2}
	\scalebox{0.85}{
		\begin{tabular}{| l | l || l | l |}\hline
			\textbf{Parameter} 		& \textbf{Value} 									& \textbf{Parameter} 						& \textbf{Value}\\\hline
			BSs' density ($\lambda_b$) & $5\times10^{-5}$	& Dimension ratio ($D/L$) & $5$  \\\hline
			Number of FAs ($M$) & $4$  	&  Voltage difference ($\Delta\phi$) & $10$ V  \\\hline
			Number of ports ($N$) & $15$ 	&  Number of skipped ports ($\nu$) & $1$ V \\\hline
			Path-loss exponent ($a$) 	& $4$ 	&  Channel variance ($\sigma$) & $1$\\\hline
			Scaling constant ($\kappa$)	& $0.2$ 	&  Bandwidth ($W_c$) 	& $100$ MHz  \\\hline
			Wavelength ($\lambda$) 	& $6$ cm 	&  Coherence time ($T_c$)	& $50$ ms   \\\hline
			Initial charge ($q$) 	& $0.07$ V  &  Percentage of channel estimation period ($L_e/L_c\times 100\%$)	& $16\%$\\\hline
			Viscosity ($\mu$) 	& $0.002$  &  Noise Variance ($N_o$)	& $10^{-5}$\\\hline
	\end{tabular}}
\end{table*}
\begin{figure}
	\centering\includegraphics[width=.6\linewidth]{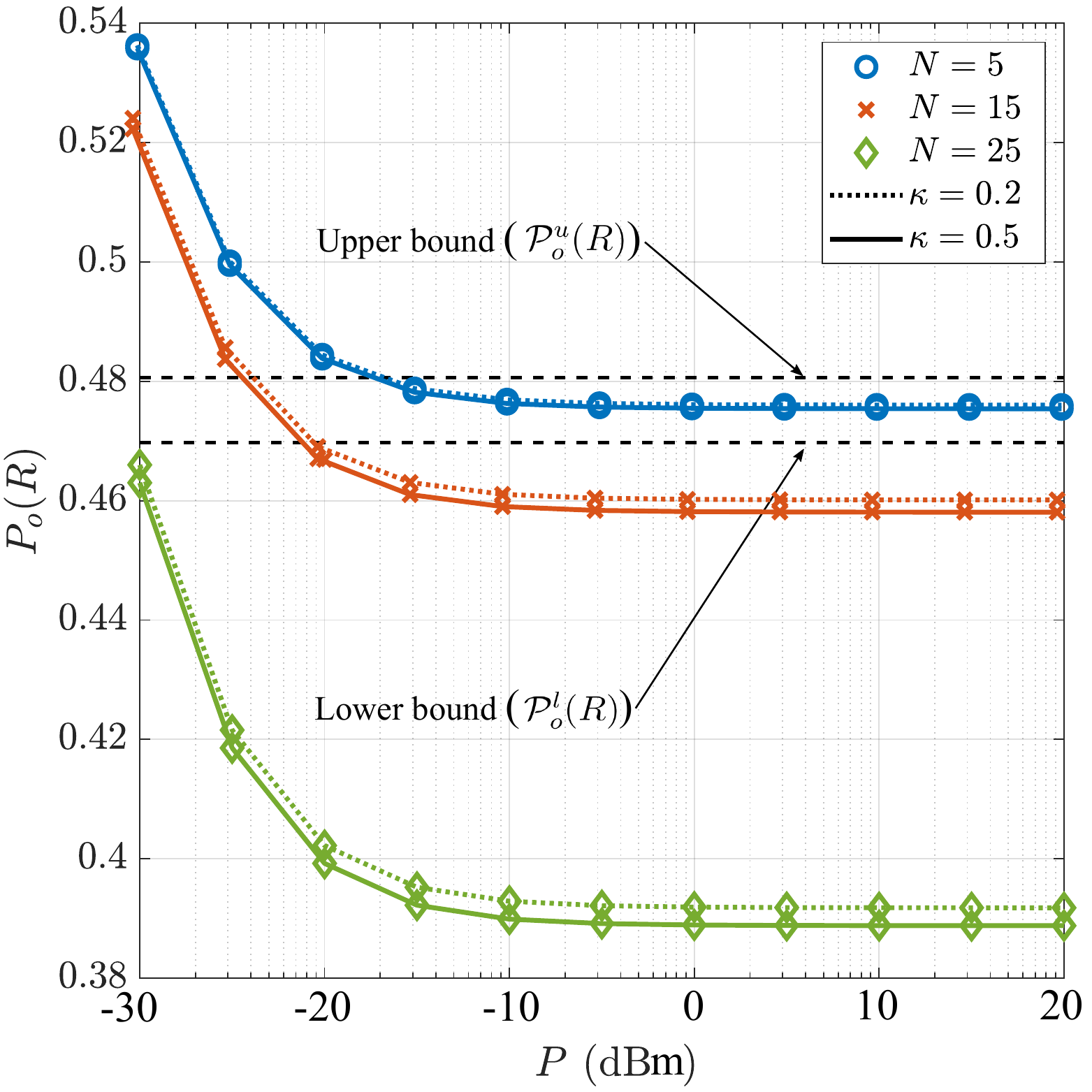}
	\caption{Outage performance versus the transmit power $P$ (dBm) for different $N$ and $\kappa$; $\nu=1$.}\label{FirstFigure}\vspace{-20pt}
\end{figure}
In this section, we provide numerical results to verify our model and illustrate the performance of FA-based UEs in large-scale cellular networks. A summary of the model parameters is provided in Table \ref{Table2}. Please note that, the selection of the simulation parameters is made for the sake of the presentation. The use of different values leads to a shifted network performance, but with the same observations and conclusions.

Fig. \ref{FirstFigure} reveals the effect of the transmit power on the outage performance achieved by a FA-enabled cellular network in the context of the proposed PS scheme. In particular, we plot the outage probability, $P_o(R)$, with respect to the transmit power $P$, for different scaling constants $\kappa=\{0.2,0.5\}$ and number of FAs' ports $N=\{5,15,25\}$. We can easily observe that larger FA architectures i.e., a larger $\kappa$, lead to a reduced outage probability. This was expected since, as the size of the FAs increases, the distance between their ports is also increases, limiting the negative effect of the spatial correlation between the ports' channels on the network performance. We can also observed that, the negative effect of the scaling constant on the achieved outage performance is minor for the considered sparse network deployment (i.e., $\lambda_b\rightarrow 0$). As expected, for sparse network deployments, the variance of the channel estimation error (i.e., channel estimation quality) becomes one i.e., $\sigma^2_{e}|_{\rho}\approxeq 1$, regardless of the scaling constant. This leads to a negligible effect of the scaling constant on the observed SINR and, consequently, on the achieved outage performance. Another important observation is that, the number of ports on each FA has a negative effect on the outage performance experienced by a UE in the considered network deployment. This observation can be explained by the fact that, a higher receive diversity gain can be achieved with the increased number of FA ports, and therefore, UEs observe an enhanced SINR. Additionally, it is clear from the figure that the outage probability asymptotically converges to a constant value which is tightly approximated by the proposed upper and lower bounds derived in Lemma \ref{Lemma3}. This behavior of the outage performance is based on the fact that as the transmission power of the nodes increases, the additive noise in the network becomes negligible. Finally, the agreement between the theoretical curves (solid and dashed lines) and the simulation results (markers) validates our analysis.
\begin{figure}
	\centering\includegraphics[width=.6\linewidth]{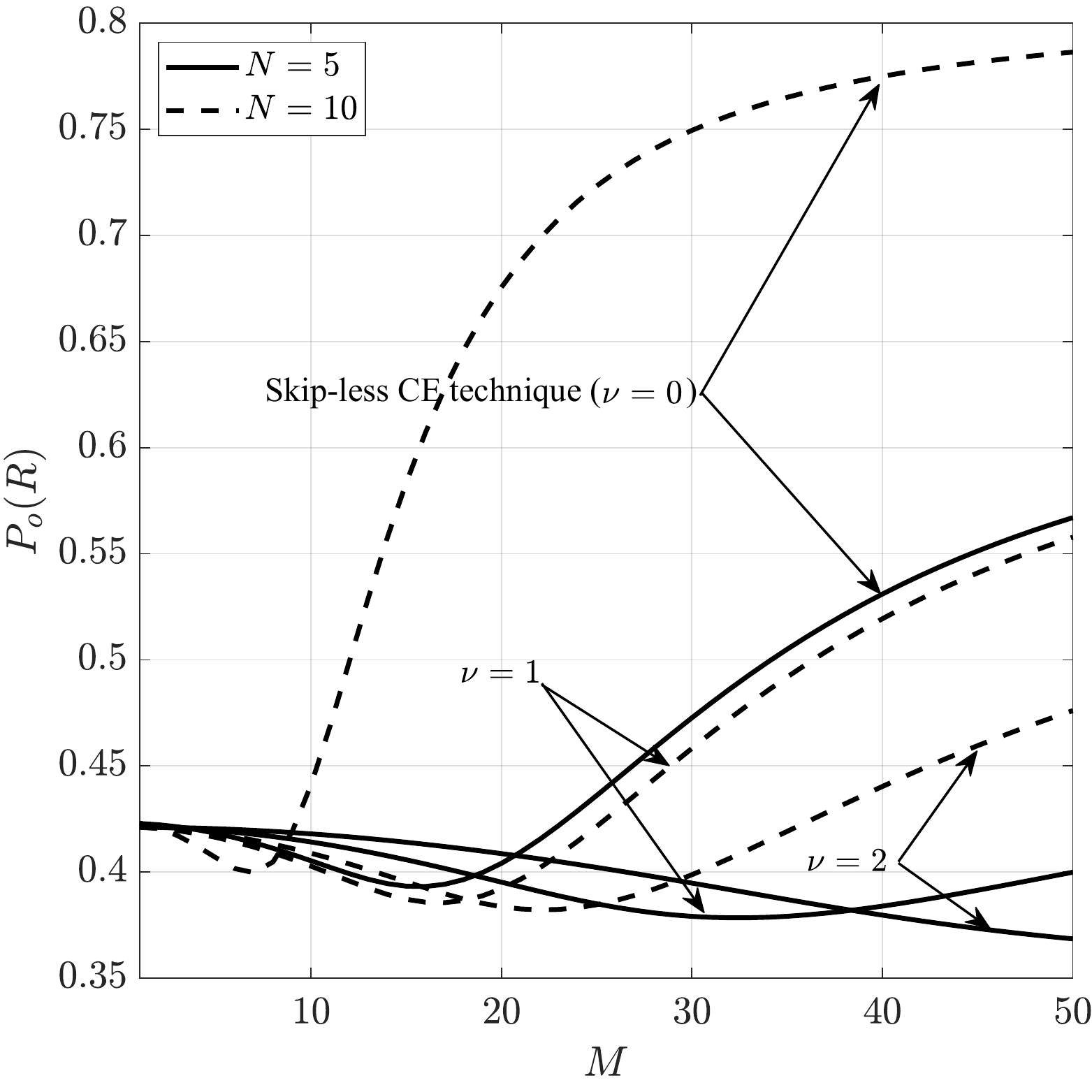}
	\caption{Outage performance versus the number of UEs' FAs $M$ for different $N$ and $\nu$.}\label{FirstFigureE}\vspace{-20pt}
\end{figure}

Fig. \ref{FirstFigureE} highlights the effectiveness of our proposed PS technique under the SeCE scheme compared to the conventional skip-less channel estimation technique for different number of FAs' ports $N=\{5,10\}$. More specifically, we plot the outage probability versus the  number of FAs that are equipped at each UE, $M$, for the proposed PS technique with $\nu=\{1,2\}$, as well as for the conventional skip-less channel estimation (denoted as ``Skip-less CE'') technique. An important observation is that, by increasing the number of FAs that are equipped at the UEs, the achieved network performance is initially enhances. This was expected since, the increased number of FAs, and consequently the elevated number of FAs' ports, results in a higher receive diversity gain, leading to a greater observed SINR at the UEs. Nevertheless, as the number of FAs at the UEs further increases, under the considered limited coherence interval scenario, the number of training-pilot symbols dedicated for the channel estimation of each port decreases, thereby the quality of the channel estimation is reduced (i.e., $\sigma^2_{e|\rho}\rightarrow 1$), jeopardizing the achieved network performance. An additional important observation which is derived from this figure is that, the conventional scheme provides a slightly better network performance compared to the proposed PS scheme for small number of FAs. This was expected, since with the utilization of the conventional scheme for a UE equipped with a low number of FAs, sufficient training-pilot signals are allocated to all FAs' ports, leading to an enhanced channel estimation quality and therefore an improved network performance. On the other hand, by utilizing the proposed PS scheme, the reduced channel estimation quality induced by neglecting the process of estimate the channel of some FAs' ports, results in a reduced network performance. However, by further increasing the number of FAs beyond a critical point, our proposed scheme overcomes the conventional scheme, providing a significantly enhanced network performance. As expected, our proposed technique reduces the number of channel estimation coefficients, enabling the channel estimation of a wider range of FAs' ports, thus, leading to an improved SINR observed by a UE and consequently an ameliorated network performance. Finally, we can observe that, by increasing the number of skipped ports, the optimal outage performance observed by FA-enabled UEs, which is achieved at a larger number of FAs, decreases. Therefore, by increasing the number of skipped ports, FA architectures with larger number of FAs can be supported that offered reduced outage performance, highlighting the practical interest of FA architectures with large number of FAs.
\begin{figure}
	\centering\includegraphics[width=.6\linewidth]{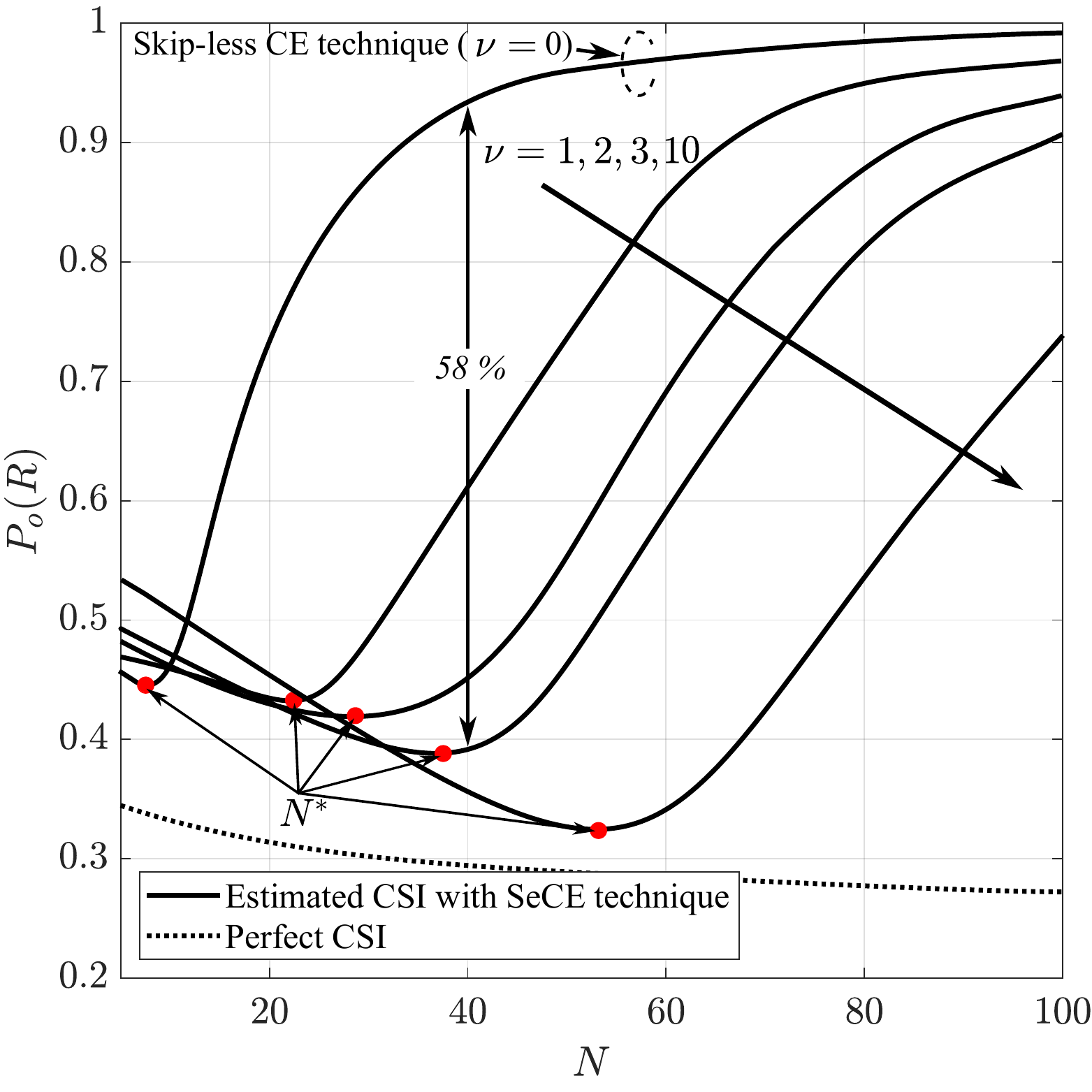}
	\caption{Outage performance versus the number of ports $N$ for different $\nu$.}\label{SecondFigure}\vspace{-20pt}
\end{figure}

Fig. \ref{SecondFigure} demonstrates the impact of the number of FAs' ports on the outage performance achieved by a FA-enabled communications in the context of the proposed PS scheme. More specifically, we plot the outage probability versus the number of FAs' ports, $N$, for different $\nu=\{1,2,3,10\}$. The first main observation is that for small number of FAs' ports, the presence of additional ports results in an elevation of the UEs' performance. On the other hand, we can easily observe that by increasing the number of FAs' ports beyond a critical point $N^*$, the outage performance increases. This is justified by the fact that, for a large number of FAs' ports, the allocated number of training-pilot symbols for the estimation of the channel of each port is decreases, thereby the quality of the channel estimation is diminished (i.e., $\sigma^2_{e|\rho}\rightarrow 1$), decreasing the achieved network performance. Regarding the effect of the proposed SeCE technique on the achieved outage performance, we can easily observe that the critical number of ports, $N^*$, increases with the increase in the number of ports for which a UE can skip their channel estimation processes. Another interesting observation is that, beyond the scenario of FAs with small number of ports (e.g., $1\leq N\leq 7$), the employment of the proposed SeCE technique leads to the improvement of the achieved network performance compared to the conventional Skip-less CE technique (i.e., $\nu=0$). This was expected since, the increased number of ports for which the channel estimation process can be skipped, triggers the allocation of more time for pilot-based training for the remaining ports, enhancing the channel estimation quality, and therefore boosting the outage performance. In particular, the employment of the proposed technique with $\nu=3$, results in an increase of the achieved outage performance by $58\%$ compared with the conventional technique for the scenario where all FAs are equipped with $N=30$ ports. For comparison purposes, we also present the outage performance obtained with a perfect (a-priori) CSI \cite{WON3}, denoted as ``Perfect CSI''. We can easily observe that, in contrast to the scenario considered with channel estimation error, the network performance with a perfect CSI is constantly increasing with the increase of FA ports. This is due to the fact that the negative effect of channel estimation quality on the network's performance is neglected.
\begin{figure}
	\centering\includegraphics[width=.6\linewidth]{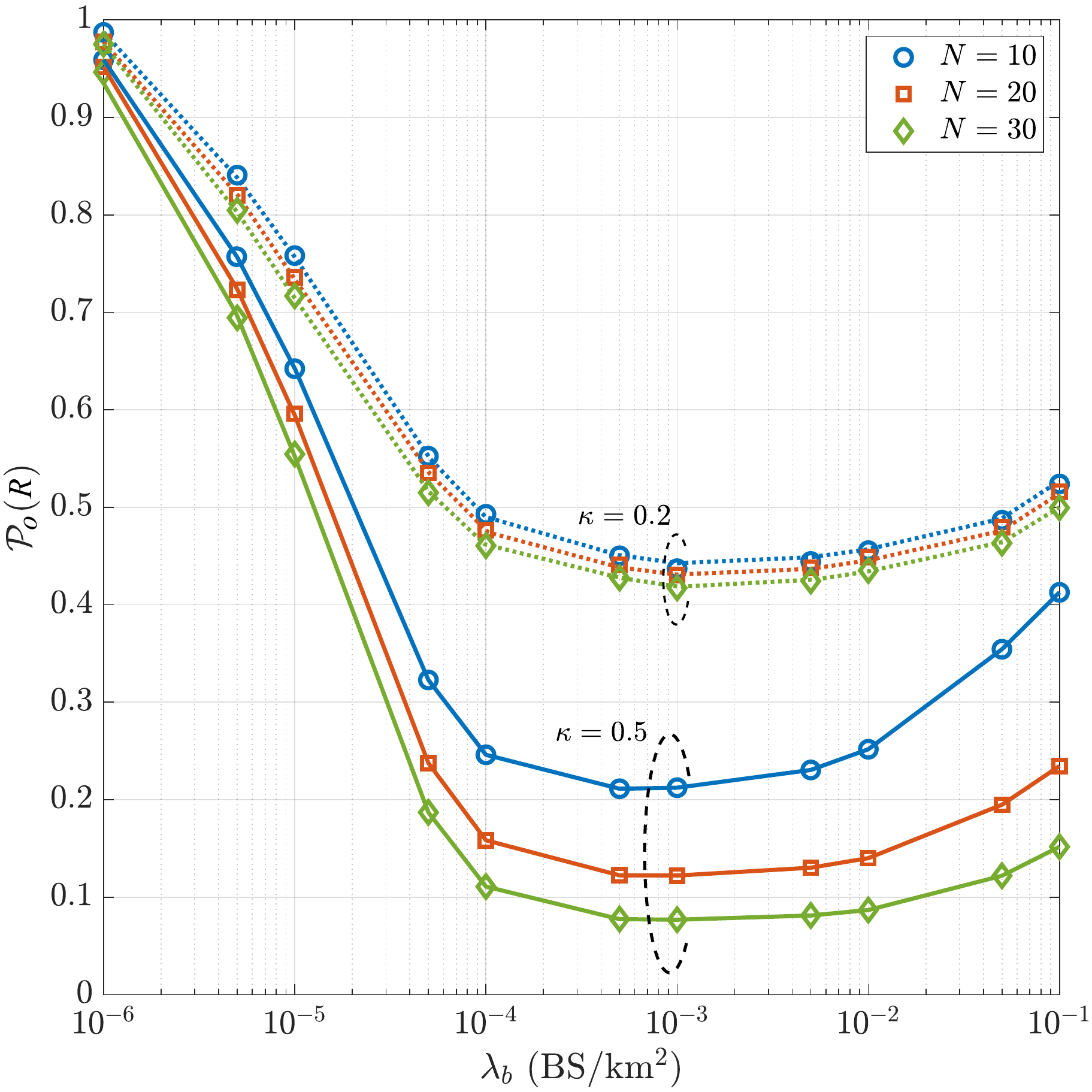}
	\caption{Outage performance versus the BSs density $\lambda_b$ (BS/km$^2$) for different $N$ and $\kappa$.}\label{ThirdFigure}\vspace{-20pt}
\end{figure}

Fig. \ref{ThirdFigure} evaluates the outage performance with respect to the density of BSs for different scaling constants $\kappa=\{0.2,0.5\}$. As mentioned before, the experienced outage performance of a FA-based UE can be reduced by adopting larger FA architectures i.e., a larger $\kappa$. Another interesting observation is that the outage performance initially decreases with $\lambda_b$ but, after a certain value of $\lambda_b$, it starts to increase. This observation is based on the fact that at low density values, the increased number of BSs leads to the reduction of the distance between the UEs and their serving BSs, thereby the observed SINR at the UEs is enhanced. On the other hand, for high BSs' densities, the overall interference caused by the active UEs increases, thereby reducing the ability of the BSs to decode the received signal. Finally, Fig. \ref{ThirdFigure} reveals that the negative impact of the scaling constant on the performance experienced by the FA-enabled UEs strongly depends on the spatial density of the BSs. In particular, for dense network deployments (i.e. high density values), by increasing the size of the UEs' FAs, that is achieved by increasing the scaling constant, leads to a greater reduction in the outage performance compared to that observed in sparse network deployments (i.e. low density values).

Fig. \ref{LastFigure} illustrates the effect of the fluid metal's speed on the number of FAs for which their channel estimation processes must skipped in order to ensure a desired channel estimation quality. More specifically, we plot the minimum number of skipped FA ports (derived in Proposition \ref{Proposition2}), $\nu^*$, versus the desired variance of channel estimation error, $\varsigma$, for different $N=\{20,40\}$ and $\Delta\phi=\{1,10,100\} V$. We can easily observe from the figure that, in order to obtain enhanced channel estimation quality i.e., $\varsigma\rightarrow 0$, a larger number of skipped FAs' ports is required. This was expected since, by increasing the number of FAs' ports for which their channel estimation processes can be omitted, leads to the allocation of more channel uses, and thus pilot-training symbols, for all ``selected'' ports, and consequently, the channel estimation quality is significantly improves. On the other hand, as the desired channel estimation quality decreases i.e.,$\varsigma\rightarrow 1$, the channel estimation process of more and more FAs' ports can be performed, decreasing the minimum number of skipped FAs' ports. Moreover, based on the figure, we can observe the positive effect of the number of FAs' ports on the minimum number of skipped FAs' ports. As expected, the increased number of FAs' ports leads to the demand of omitting of a larger number of estimated channel coefficients in order to maintain the same allocation of pilot-training symbols for all ``selected'' ports. Finally, Fig. \ref{LastFigure} highlights the importance of considering the delay occurred by the movement of fluid metal between the FAs' ports, owing to its finite velocity within the capillary of a FA. 
\begin{figure}
	\centering\includegraphics[width=.6\linewidth]{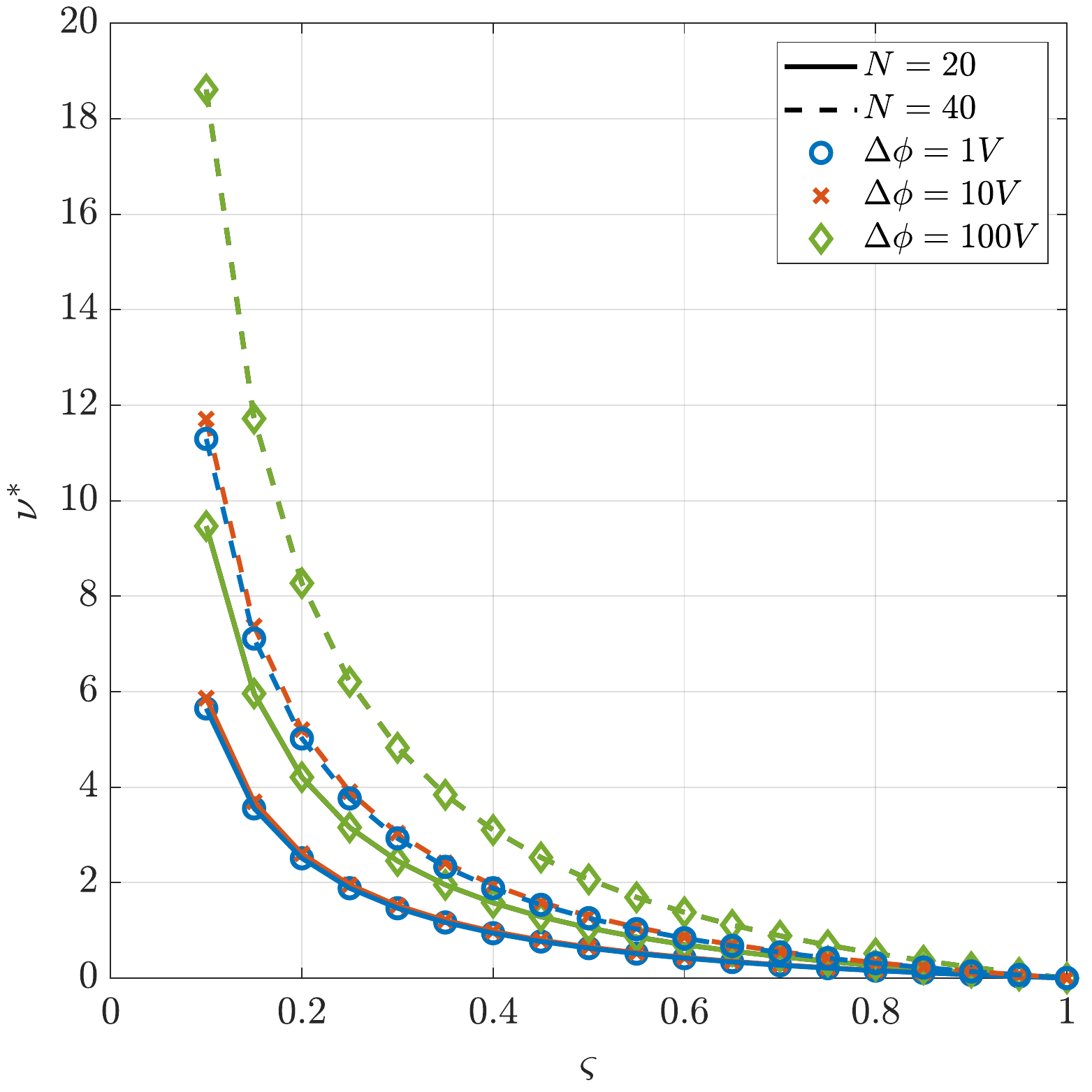}
	\caption{Minimum number of skipped FAs' ports versus the targeted variance of channel estimation error $\varsigma$ for different $N$ and $\Delta\phi$.}\label{LastFigure}\vspace{-20pt}
\end{figure} 
\section{Conclusion}\label{Conclusion}
In this paper, we presented an analytical framework based on SG to study the outage performance of FA-based UEs in the context of large-scale homogeneous cellular networks. The developed framework takes into account the employment of a circular multi-FA array at the UEs, as well as the presence of both channel estimation error and channel correlation effects. Aiming at improving the channel estimation quality and thereby enhancing the network's performance in practical limited coherence interval scenarios, we proposed a low-complexity PS scheme that employs a novel LMMSE-based channel estimation technique. In particular, the proposed technique activates a single port of a UE's FA array based on the observed SINR, among a subset of FA ports that are estimated to provide the strongest channel. Analytical and approximated closed-form expressions for the outage performance in the context of our proposed technique were derived, and the impact of nodes density, block length, and number of FAs' ports has been discussed. Our results highlight the impact of the FAs' architecture and the network topology on the optimal number of FA and ports, providing guidance for the planning of cellular networks in order to achieve enhanced network performance. Finally, our results demonstrated that the proposed scheme provides significant performance gains compared to the conventional case, whilst keeping the complexity low.

%For the purpose of addressing the new and inevitable challenges induced by the FA technology, in this paper, we have explored both a novel channel estimation scheme and a low complexity port selection mechanism in the context of FA-enabled wireless cellular networks. To cope with the randomness of the considered network deployments, we have proposed an analytical framework based on SG to study the outage performance of UEs, that employ a circular multi-FA array, from a macroscopic point-of-view. The developed mathematical framework takes into account the spatial randomness of BSs and UEs, as well as the presence of both channel estimation error and channel correlation effects. Based on the developed mathematical framework, the outage performance achieved with the proposed techniques was analytically derived and the impact of nodes density, block length, and number of FAs' ports has been discussed. Our results highlight the impact of the FAs' architecture and the network topology on the optimal number of FA and ports, providing guidance for the planning of cellular networks in order to achieve enhanced network performance. 

\appendices
\section{Proof of Proposition \ref{Proposition1}}\label{Appendix1}
By leveraging the orthogonality property of the LMMSE estimator, we have $\mathbb{E}\left[\hat{g}_{0i}^{(k)}|_{\rho}\left(e_i|_{\rho}\right)^*\right]=0$, where $g_{0i}^{(k)}=\hat{g}_{0i}^{(k)}|_{\rho}+e_i|_{\rho}$. Hence, the estimation variance is given by
\begin{equation}
\sigma^2_{e}|_{\rho}\triangleq\mathbb{E}\left[\left(e_i|_{\rho}\right)^2\right]=1-\mathbb{E}\left[\left|\hat{g}_{0i}^{(k)}|_{\rho} \right|^2 \right],
\end{equation}
where, by using \eqref{VarianceProof}
\begin{equation}
\mathbb{E}\left[\left|\hat{g}_{0i}^{(k)}|_{\rho} \right|^2 \right] = \left(\frac{\sqrt{\frac{L_e P}{\ell(r_i(\rho))}}}{\frac{L_e P}{\ell(r_i(\rho))}+\frac{N_0}{\sigma^2}+P\mathbb{E}(\mathcal{I})}\right)^2\mathbb{E}\left[\left(\widetilde{y}_i^{(k)}\right)^*\widetilde{y}_i^{(k)}|\rho\right].
\end{equation}
The final expression follows by evaluating the expectation.

\section{Proof of Lemma \ref{Lemma1}}\label{Appendix2}
Under the proposed LMMSE-based SeCE technique, the channel experienced by the $i$-th port of the typical UE's $k$-th FA can be expressed as $\widehat{g}_{{\rm 0}i}^{(k)}|_{\rho}=g_{{\rm 0}i}^{(k)}+e_i|_{\rho}$, where $g_{{\rm 0}i}^{(k)}\stackrel{d}{\sim}\mathcal{CN}(0,\sigma^2)$ and $e_i|_{\rho}\stackrel{d}{\sim}\mathcal{CN}\left(0,\sigma^2_{e}|_{\rho}\right)$, and therefore, $\widehat{g}_{{\rm 0}i}^{(k)}|_{\rho}\stackrel{d}{\sim}\mathcal{CN}\left(0,\widetilde{\sigma}_i^2\right)$, where $\widetilde{\sigma}_i^2=\sigma^2(1-\mu_i^2)+\sigma^2_{e}|_{\rho}$. Under this model, the amplitude of the estimated channels, $|\widehat{g}_{{\rm 0}i}^{(k)}|$, is Rayleigh distributed, with pdf 
\begin{equation}\label{pdf}
f_{|\widehat{g}_{{\rm 0}i}^{(k)}|}(\tau) = \frac{2\tau}{\widetilde{\sigma}_i^2}\exp\left(-\frac{\tau^2}{\widetilde{\sigma}_i^2}\right),
\end{equation}
with $\mathbb{E}\left[|\widehat{g}_{{\rm 0}i}^{(k)}|^2\right]=\widetilde{\sigma}_i$. As it already mentioned, the channels $\{\widehat{g}_{{\rm 0}i}^{(k)}\}$ are correlated due to the capability of FA's ports to be arbitrarily close to each other. In particular, the amplitude of the estimated channel $|\widehat{g}_{{\rm 0}2}^{(k)}|$, conditioned on $\rho$, $x_0$ and $y_0$, follows a Rice distribution, i.e.,
\begin{equation}
f_{\left|\widehat{g}_{{\rm 0}2}^{(k)}\right||\{x_0,y_0\}}(\tau_2|\rho,x_0,y_0)=\frac{2\tau_2}{\widetilde{\sigma}_i^2}\exp\left(-\frac{\tau_2^2+\mu^2_2(x_0^2+y_0^2)}{\widetilde{\sigma}_i^2}\right)I_0\left(\frac{2\mu_2\sqrt{x_0^2+y_0^2}\tau_2}{\widetilde{\sigma}_i^2}\right),
\end{equation}
where $\tau_2\geq 0$. By substituting $\tau_1\rightarrow\sqrt{x_0^2+y_0^2}$ and since $x_0,y_0,\left|\widehat{g}_{{\rm 0}2}^{(k)}\right|,\dots,\left|\widehat{g}_{{\rm 0}N'}^{(k)}\right|$ are all independent between each other, the joint pdf of the estimated channels, conditioned on $\left|\widehat{g}_{{\rm 0}1}^{(k)}\right|$, can be expressed as
\begin{equation}
f_{\left|\widehat{g}_{{\rm 0}2}^{(k)}\right|,\dots,\left|\widehat{g}_{{\rm 0}N'}^{(k)}\right||\left|\widehat{g}_{{\rm 0}1}^{(k)}\right|}(\tau_2,\dots,\tau_{N'}|\rho,\tau_1)=\prod_{i\in\mathfrak{N}}\frac{2\tau_i}{\widetilde{\sigma}_i^2}\exp\left(-\frac{\tau_i^2+\mu^i_2\tau_1^2}{\widetilde{\sigma}_i^2}\right)I_0\left(\frac{2\mu_i\tau_1\tau_i}{\widetilde{\sigma}_i^2}\right).
\end{equation}
Then, the final expression can be achieved by un-conditioning the above expression with the pdf of $\left|\widehat{g}_{{\rm 0}1}^{(k)}\right|$ given in \eqref{pdf}, i.e. $f_{\left|\widehat{g}_{{\rm 0}2}^{(k)}\right|,\dots,\left|\widehat{g}_{{\rm 0}N'}^{(k)}\right||\left|\widehat{g}_{{\rm 0}1}^{(k)}\right|}(\tau_2,\dots,\tau_{N'}|\rho,\tau_1)f_{\left|\widehat{g}_{{\rm 0}1}^{(k)}\right|}(\tau_1)$, which gives the desired expression.

Regarding the joint cdf of $\left|\widehat{g}_{{\rm 0}i}^{(k)}\right|$, based on the definition, this is given by
\begin{equation}
F_{\left|\widehat{g}_{{\rm 0}1}^{(k)}\right|,\dots,\left|\widehat{g}_{{\rm 0}N'}^{(k)}\right|}(\tau_1,\dots,\tau_{N'}|\rho)=\int_{0}^{\tau_1}\cdots\int_{0}^{\tau_{N'}}f_{\left|\widehat{g}_{{\rm 0}1}^{(k)}\right|,\dots,\left|\widehat{g}_{{\rm 0}N'}^{(k)}\right|}(\tau_1,\dots,\tau_{N'}|\rho){\rm d}\tau_1\cdots{\rm d}\tau_{N'}.
\end{equation}
By using the derived expression for the joint pdf which is given by \eqref{PDF}, the above can be re-written as
\begin{align}
&F_{\left|\widehat{g}_{{\rm 0}1}^{(k)}\right|,\dots,\left|\widehat{g}_{{\rm 0}N'}^{(k)}\right|}(\tau_1,\dots,\tau_{N'}|\rho)\nonumber\\&\qquad=\int_0^{\tau_1}\frac{2t_1}{\widetilde{\sigma}_1^2}\exp\left(-\frac{t_1^2}{\widetilde{\sigma}_1^2}\right)\prod_{j\in\mathfrak{N}}\left[\int_0^{\tau_j}\frac{2t_j}{\widetilde{\sigma}_j^2}\exp\left(-\frac{t_j^2+\mu^2_jt_1^2}{\widetilde{\sigma}_j^2}\right)I_0\left(\frac{2\mu_jt_1t_j}{\widetilde{\sigma}_j^2}\right){\rm d}t_j\right]{\rm d}t_1.
\end{align}
Note that, the integral inside the product operator is an integration over the pdf of a Ricean random variable, which therefore gives \cite[2.20]{Marcumbook}
\begin{equation*}
F_{\left|\widehat{g}_{{\rm 0}1}^{(k)}\right|,\dots,\left|\widehat{g}_{{\rm 0}N'}^{(k)}\right|}(\tau_1,\dots,\tau_{N'}|\rho)=\int_0^{\tau_1}\frac{2t_1}{\widetilde{\sigma}_1^2}\exp\left(-\frac{t_1^2}{\widetilde{\sigma}_1^2}\right)\prod_{j\in\mathfrak{N}}\left[1-Q_1\left(\sqrt{\frac{2\mu_j^2}{\widetilde{\sigma}_j^2}}t_1,\sqrt{\frac{2}{\widetilde{\sigma}_j^2}}\tau_j\right)\right]{\rm d}t_1.
\end{equation*}
The final expression can be derived by using the transformation $t=\frac{t_1^2}{\widetilde{\sigma}_1^2}$, which concludes the proof.

\section{Proof of Proposition \ref{InterferenceDistribution}}\label{Appendix3}
The multi-user interference can be approximated by matching the mean and variance of the Gamma distribution with $\mathbb{E}[\mathcal{I}]$ and ${\rm Var}(\mathcal{I})$, resulting in a very tight approximation. In particular, the mean and the variance of the multi-user interference can be calculated as in \eqref{MeanInterference} and 
\begin{equation}
{\rm Var}(\mathcal{I})=\int_0^\infty \frac{2\tau_1}{\sigma^2}\exp\left(-\frac{\tau_1^2}{\sigma^2}\left(\int_{0}^\infty\tau_i^4f_{\tau_i|\tau_1}(\tau_i|\tau_1)\right){\rm d}\tau_i\right){\rm d}\tau_1=2\sigma^4,
\end{equation}
respectively, where $f_{\tau_i|\tau_1}(\tau_i|\tau_1)$ can be obtained by \cite[Theorem 1]{WON1}. The pdf of a random variable $X$ that follows a Gamma distribution with shape parameters 
$k,\ \theta>0$, is given by \eqref{Gammapdf}, with the first and second moments of random variable $X$ are given by $\mathbb{E}[X]=k\theta$ and $\mathbb{E}[X^2]=k\theta^2$, respectively. Hence, letting $\mathbb{E}[X]=\mathbb{E}[\mathcal{I}]$ and $\mathbb{E}[X^2]={\rm Var}(\mathcal{I})$, the shape parameters of the Gamma distribution can be derived by setting $k=(\mathbb{E}[\mathcal{I}])^2/{\rm Var}(\mathcal{I})$ and $\theta={\rm Var}(\mathcal{I})/\mathbb{E}[\mathcal{I}]$, respectively.

\section{Proof of Lemma \ref{Lemma3}}\label{Appendix4}
By definition, $0\leq Q_1(\alpha,\beta)\leq 1$, and therefore $\prod_j(1-Q_1(\alpha,\beta))\approx 1-\sum_jQ_1(\alpha,\beta)$. Then, for the special case where $\mu_2=\dots=\mu_N=\mu$, the expression for the conditional outage probability can be approximated as follows
\begin{align}
	\mathcal{P}_o(R|\rho,\mathcal{I}_{\mathtt{i}_k})& \approx  \int_0^{\frac{\Theta_1^2}{\widetilde{\sigma}_1^2}}\exp(-t)\left[1-\sum_{j\in\mathfrak{N}}Q_1\left(\sqrt{2\mu^2\frac{\widetilde{\sigma}_1^2}{\widetilde{\sigma}_j^2}t},\sqrt{\frac{2}{\widetilde{\sigma}_j^2}}\Theta_j\right)\right]{\rm d}t\nonumber \\
	&=1-\exp\left(-\frac{\Theta_1^2}{\widetilde{\sigma}_1^2}\right)-\sum_{j\in\mathfrak{N}}\int_0^{\frac{\Theta_1^2}{\widetilde{\sigma}_1^2}}\exp(-t)Q_1\left(\sqrt{2\mu^2\frac{\widetilde{\sigma}_1^2}{\widetilde{\sigma}_j^2}t},\sqrt{\frac{2}{\widetilde{\sigma}_j^2}}\Theta_j\right){\rm d}t,\label{Proof1}
\end{align}
where $\Theta_j \stackrel{\epsilon\rightarrow\infty}{\approxeq} \vartheta \left(r_j^a(\rho)\mathcal{I}_j+\sigma^2_e|_{\rho}\right)$, and \eqref{Proof1} follows by changing the order of the summation and the integral. Due to the small FAs' form factor, and thus, the relatively smaller distance between FA's ports compared to their communication link length with the serving BS, we assume that all UEs' ports are equidistant with the serving BS i.e., $r_i(\rho)\approxeq r(\rho)$ $\forall i\in\mathfrak{N}$ when $\lambda_b\ll 1$. In the context of the considered interference-limited scenario with $\mu_2=\dots=\mu_N=\mu$, all FAs' ports experience the same variance of channel estimation error i.e., $\sigma^2_{e}|_{\rho} \approxeq2\pi\lambda_b\frac{N}{L_T}\frac{r(\rho)^{2}}{a-2}$, and hence $\widetilde{\sigma}_j^2=\widetilde{\sigma}^2,\ \forall j\in\mathfrak{N}\backslash\{1\}$. Then, based on \cite[C.23]{Marcumbook}, the following upper bound of $Q_1(\alpha,\beta)$ exists
\begin{equation}\label{bounds}
	\exp\left(-\frac{(\beta+\alpha)^2}{2}\right)\leq Q_1(\alpha,\beta)\leq \exp\left(-\frac{(\beta-\alpha)^2}{2}\right).
\end{equation}
Using the above upper bound on the $Q_1(\cdot, \cdot)$ term inside the integration of \eqref{Proof1} gives
\begin{align}
	\mathcal{P}_o(R|\rho,\mathcal{I}_{\mathtt{i}_k}) &\leq 1-\exp\left(-\frac{\Theta_1^2}{\widetilde{\sigma}^2}\right)-\sum_{j\in\mathfrak{N}}\int_0^{\frac{\Theta_1^2}{\widetilde{\sigma}^2}}\exp(-t)\exp\left(-\frac{\left(\sqrt{\frac{2}{\widetilde{\sigma}^2}}\Theta_j-\sqrt{2\mu^2t}\right)^2}{2}\right){\rm d}t\nonumber\\
	&\!=\! 1\!-\!\exp\left(\!-\frac{\Theta_1^2}{\widetilde{\sigma}^2}\right)\!-\!\sum_{j\in\mathcal{N}}\frac{\exp\left(-\frac{\Theta_j^2}{\widetilde{\sigma}^2}\right)}{1+\mu^2}\Bigg(1-\exp\left(2\sqrt{\mu^2\frac{\Theta_1^2\Theta_j^2}{\widetilde{\sigma}^4}}-\frac{\Theta_1^2}{\widetilde{\sigma}^2}(1+\mu^2)\right)\nonumber\\
	&\quad\!+\!\sqrt{\!\frac{\pi\mu^2}{1\!+\!\mu^2}\frac{\Theta_j^2}{\widetilde{\sigma}^2}}\exp\!\left(-\frac{\Theta_j^2}{\widetilde{\sigma}^2(1\!+\!\mu^2)}\right)\!\Delta\!\left(\!\sqrt{\frac{\Theta_1^2}{\widetilde{\sigma}^2}(1\!+\!\mu^2)},\sqrt{\frac{\Theta_j^2}{\widetilde{\sigma}^2}\frac{\mu^2}{1\!+\!\mu^2}}\right)\!\Bigg),\label{Proof2}
\end{align}
where \eqref{Proof2} follows based on \cite[C.23]{Marcumbook}, $\Delta(\alpha,\beta)= {\rm erf}[\alpha-\beta]+{\rm erf}[\beta]$, and ${\rm erf}[\cdot]$ is the error function i.e., ${\rm erf}[\alpha]=\frac{2}{\sqrt{\pi}}\int_0^\alpha e^{-t^2}{\rm d}t$. Since $0\leq\mu\leq 1$, the inequality $1+\mu^2>\frac{\mu^2}{1+\mu^2}$ holds, and therefore, $\Delta(\alpha,\beta)\approxeq2$. Hence, by letting $\Xi_i = \frac{\Theta_i^2}{\widetilde{\sigma}^2}$ and by assuming $\frac{\Xi_j}{\Xi_1}\rightarrow1$ when $\Xi_i\rightarrow\infty,\ \forall i\in\mathfrak{N}$, the final expression can be derived. Regarding the lower bound for the conditional outage probability, based on \eqref{bounds}, $\mathcal{P}_o(R|\rho,\mathcal{I}_{\mathtt{i}_k})$ is lower bounded as follows
\begin{align}
	\mathcal{P}_o(R|\rho,\mathcal{I}_{\mathtt{i}_k}) &\geq 1-\exp\left(-\frac{\Theta_1^2}{\widetilde{\sigma}^2}\right)-\sum_{j\in\mathfrak{N}}\int_0^{\frac{\Theta_1^2}{\widetilde{\sigma}^2}}\exp(-t)\exp\left(-\frac{\left(\sqrt{\frac{2}{\widetilde{\sigma}^2}}\Theta_j+\sqrt{2\mu^2t}\right)^2}{2}\right){\rm d}t\nonumber\\
	&\!=\! 1\!-\!\exp\left(\!-\frac{\Theta_1^2}{\widetilde{\sigma}^2}\right)\!-\!\sum_{j\in\mathfrak{N}}\frac{\exp\left(-\frac{\Theta_j^2}{\widetilde{\sigma}^2}\right)}{1+\mu^2}\Bigg(1-\exp\left(-2\sqrt{\mu^2\frac{\Theta_1^2\Theta_j^2}{\widetilde{\sigma}^4}}-\frac{\Theta_1^2}{\widetilde{\sigma}^2}(1+\mu^2)\right)\nonumber\\
	&\quad\!-\!\sqrt{\!\frac{\pi\mu^2}{1\!+\!\mu^2}\frac{\Theta_j^2}{\widetilde{\sigma}^2}}\exp\!\left(-\frac{\Theta_j^2}{\widetilde{\sigma}^2(1\!+\!\mu^2)}\right)\!\widetilde{\Delta}\!\left(\!\sqrt{\frac{\Theta_1^2}{\widetilde{\sigma}^2}(1\!+\!\mu^2)},\sqrt{\frac{\Theta_j^2}{\widetilde{\sigma}^2}\frac{\mu^2}{1\!+\!\mu^2}}\right)\!\Bigg),\label{Proof3}
\end{align} 
where $\widetilde{\Delta}(\alpha,\beta)= {\rm erf}[\alpha+\beta]-{\rm erf}[\beta]$. Note that, $\widetilde{\Delta}(\alpha,\beta)\approxeq0$ for $1+\mu^2>\frac{\mu^2}{1+\mu^2}$. Hence, by following similar approach as with the derivation of the lower bound and according to \cite[C.23]{Marcumbook}, the lower bound of the conditional outage probability can be derived.

\end{document}